\newtheorem{theorem}{Theorem}
\newtheorem{lemma}{Lemma}
\theoremstyle{definition}
\newtheorem{definition}{Definition}
\newtheorem{assumption}{Assumption}
\newcommand{\bbR}{\mathbb{R}}
\newcommand{\bbE}{\mathbb{E}}
\newcommand{\bbP}{\mathbb{P}}
\newcommand{\Var}{\text{Var}}
\newcommand{\st}{\text{ } \vert \text{ } }
\newcommand{\indic}{\mathds{1}}
\newcommand{\indep}{\perp \!\!\! \perp}
\newcommand{\nindep}{\not\!\perp\!\!\!\perp}
\newcommand{\at}{\mathrm{AT}}
\newcommand{\nt}{\mathrm{NT}}
\newcommand{\co}{\mathrm{C}}
\newcommand{\oracle}{\mathrm{oracle}}
\newcommand{\plugin}{\mathrm{plug-in}}
\newcommand{\op}{\mathrm{op}}
\newcommand{\simind}{\stackrel{\mathrm{ind}}{\sim}}
\newcommand{\dbern}{\mathrm{Bern}}
\newcommand{\diag}{\mathrm{diag}}
\newcommand{\late}{\mathrm{LATE}}
\newcommand{\rd}{\,\mathrm{d}}
\newcommand{\phe}{\phantom{=}}
\title{Constrained Design of a Binary Instrument in a Partially Linear Model}
\author[1]{Tim Morrison}
\author[2]{Minh Nguyen}
\author[2]{Jonathan Chen}
\author[3,1]{Michael Baiocchi}
\author[1]{Art B. Owen}
\affil[1]{Department of Statistics, Stanford University}
\affil[2]{Department of Biomedical Data Science, Stanford University}
\affil[3]{Department of Epidemiology and Population Health, Stanford University}
\date{May 2025}
\begin{document}

\maketitle

\begin{abstract}
We study the question of how best to assign an encouragement in a randomized encouragement study. In our setting, units arrive with covariates, receive a nudge toward treatment or control, acquire one of those statuses in a way that need not align with the nudge, and finally have a response observed. The nudge can be modeled as a binary instrument if one assumes that it affects the response only via the treatment status. Our goal is to assign the nudge as a function of covariates in a way that best estimates the local average treatment effect (LATE). We assume a partially linear model, wherein the baseline model is non-parametric and the treatment term is linear in the covariates. Under this model, we outline a two-stage procedure to consistently and optimally estimate the LATE. Though the variance of the LATE is intractable, we derive a finite sample approximation and thus a design criterion to minimize. This criterion is convex, allowing for constraints that might arise for budgetary or ethical reasons. We prove conditions under which our solution asymptotically recovers the lowest true variance among all possible nudge propensities. A one-stage version of the algorithm is consistent but not necessarily optimal. We apply our method to a semi-synthetic example involving triage in an emergency department and find significant gains relative to a regression discontinuity design.
\end{abstract}

\section{Introduction}
In this paper, we study the causal effect of a binary treatment
on subjects who do not necessarily comply with
their treatment assignment.  For a heterogeneous population with covariates measured prior
to treatment, it can be beneficial to vary treatment probabilities using methods from experimental design to optimally estimate the causal effect of the treatment. 
However, the potential for noncompliance presents difficulties for this design problem because the experimenter cannot directly choose these probabilities.

This problem arises in several practical settings.  For instance, suppose that an electronic commerce company has introduced a new AI shopping assistant on their website and is interested in the average effect of opting into this assistant on downstream purchases. It may be the case that, independently of the assistant's effect itself, users who opt into allowing it are also more likely to purchase due to some unmeasured confounder such as being ``tech savvy." This creates a dependence between treatment status and potential outcomes even conditional on observed covariates that renders standard causal inference methods invalid. 
Another example, which we discuss in this paper in greater detail, concerns triage in an emergency department (ED). There, patients who are sent to the intensive care unit (ICU) may differ in unmeasured ways from those who are not, making it difficult to estimate the causal effect of being sent to the ICU.

To handle noncompliance issues, researchers often use a randomized encouragement design \citep{zelen1979, holland1988, westRCTalts, bradlow, imai2013}, in which subjects are randomly nudged toward or away from treatment.  Under mild conditions, this nudge is a binary instrumental variable (IV), enabling IV methods to be used for analysis. The goal in the randomized encouragement setting then changes from finding optimal treatment probabilities to finding optimal encouragement probabilities for all subjects.

Because of ethical and economic considerations that arise in real-world problems, we are particularly interested in being able to derive optimal designs of this nudge under constraints. For instance, there may be a budget constraint on the expected number of treated units or an ethical constraint that no more deserving subject has a lower encouragement probability than some less deserving subject. We may also want to impose covariate balance constraints, as described in \cite{morrisonowen}. 

We study efficiency using a partially linear model, with a non-parametric baseline mean response and a linear model for the treatment effect. 
This model sacrifices some flexibility relative to fully non-parametric IV work based on deep learning, such as \cite{hartford, chernozhukov2017, bennett2019, syrgkanis2019}. However, our view is that the partially linear model retains desirable flexibility in the baseline model while giving a concrete and practical criterion for the design stage. A linear treatment effect is also more readily interpretable than a black box \citep{moosavi2023costs}. 

By adapting some techniques for the partially linear model from \cite{robinson}, we derive a method to estimate the local average treatment effect (LATE) in our setting. This is the average treatment effect on those who comply with the nudge and is estimable without assuming unconfoundedness. Crucially, the variance of our estimator can be approximated by a quantity that is convex in the encouragement probabilities, which are what the experimenter is allowed to choose. This enables the construction of a convex optimization procedure and a notion of the optimal encouragement under convex constraints. We show as well that, under certain assumptions presented in Section \ref{sec:methodology}, these approximations to the true variance make a negligible difference
as the sample size tends to infinity. 

We include a semi-synthetic numerical example motivated by the ED triage problem discussed earlier.  Our example makes use of
a machine learning-derived prediction of whether a newly admitted patient is likely to end up in the ICU eventually.  When that estimated probability is very high, it is natural that we might recommend admission to the ICU more often. There are several outcomes of interest in this setting, including the length of hospital stay.  More details about this motivating example are given in Section~\ref{sec:triage}.

The problem of optimally choosing a propensity score to estimate a treatment effect of interest has been studied in a variety of settings. These include under a linear model with design constraints \citep{liowenTBD, morrisonowen}, for batch-adaptive designs \citep{hahnbatch, liowenbatch}, and for sequential designs \citep{vanderlaan2008, kato2021}. However, this question remains minimally explored in the context of confounding and instrumental variables. Several papers, e.g., \cite{kuersteiner, kuang2020}, consider the problem of optimally combining a set list of candidate instruments, which is conceptually similar to our question of how best to design a single instrument. 

Most similar to our work is the recent paper of \cite{chandak2023adaptive}, who study how to choose an optimal policy for instrumental variable assignment in some policy class. Their approach combines the Deep IV framework of \cite{hartford} with the use of influence functions to obtain an estimate of the gradient of the loss with respect to the policy parameters. While addressing the same problem, their method does not yield a convex objective and so has no guarantee of finding the global optimum. In addition, they do not consider the case in which there are constraints on the design, which is important for our motivating example. 

This paper is organized as follows. In Section \ref{sec:setup}, we introduce the key notation, assumptions, and parameters of interest. In Section \ref{sec:methodology}, we derive our proposed method and state some relevant theoretical guarantees. In Section \ref{sec:triage}, we apply our method to a semi-synthetic example based on an emergency department dataset. We present additional proofs in Appendix~\ref{sec:proofs}, and we discuss generalizations of our procedure involving cross-fitting and heteroscedasticity in Appendices \ref{sec:cross-fitting} and \ref{sec:WLS}, respectively. We provide further details of our simulation in Appendix~\ref{sec:expdetails}. 

\section{Setup} \label{sec:setup}
We use $X_i \in \mathcal{X} \subseteq \mathbb{R}^d$ to denote the covariates of unit $i$ and $W_i \in \{0, 1\}$ to denote their (actual) treatment status. Here, $W_i = 0$ and $W_i = 1$ correspond respectively to control and treatment. We let $Y_i \in \bbR$ denote the response of interest for unit $i$. We adopt the potential outcomes framework \citep{rubin}. We also make the stable unit treatment value assumption (SUTVA) that a subject's response depends only on their own treatment assignment. 

Importantly, however, we do not make the standard unconfoundedness assumption that a unit's treatment status is conditionally independent of their response given $X$. Concretely, if we write 
\begin{align} \label{generalequation} Y = g(X, W) + \varepsilon, \end{align} 
where $\bbE[\varepsilon] = 0$, then as in \cite{hartford}, we allow for $\bbE[\varepsilon \st X, W] \neq 0$.

Instead, we assume access to a binary instrumental variable $Z \in \{0, 1\}$, which only impacts $Y$ through its effect on the treatment status $W$. This $Z$ is the binary ``nudge'' that we mention in the introduction. The nudges $Z_i$ are independent but not identically distributed Bernoulli random variables. Such instruments arise by construction in randomized encouragement designs, in which the experimenter can only encourage a subject to have a certain treatment status but cannot choose it directly. 

We write $W_i(Z_i)$ for $Z_i \in \{0, 1\}$ to denote the treatment status that subject $i$ would have if given the nudge status $Z_i$. Similarly, we write $Y_i(W_i)$ for the potential outcome for subject $i$ corresponding to the treatment $W_i$. We could write $Y_i(Z_i, W_i)$ to account for both the nudge status and the treatment status, but we will assume below that this is not necessary. 

We are now ready to formally state the conditions on our instrument $Z$. 
\begin{definition}\label{def:IV} An instrumental variable $Z \in \{0, 1\}$ is any random variable that satisfies: 
\begin{enumerate}
\item Exclusion restriction: $Y(w, z) = Y(w)$ for all $(w, z) \in \{0, 1\}^2$.
\item Conditional exogeneity: $Z \indep \{Y(0), Y(1), W(0), W(1)\} \st X$.
\item Relevance: $Z \nindep W \st X$. 
\end{enumerate} 
\end{definition}
\noindent
Of note, our assumptions on the instrument $Z$ are conditional on $X$, as in \cite{abadie}. This is because, in our motivating setting, we allow the distribution of $Z$ to depend on $X$. 

In our setting, $Z$ is generated randomly from a distribution depending only on $X$, and so conditional exogeneity holds. For the applications we have in mind, it is not plausible that the nudge would be completely ignored in the choice of treatment $W$, and so relevance is a very reasonable assumption. This leaves the exclusion restriction, which states that the outcome $Y$ only depends on the nudge through the nudge's effect on the treatment. While often hard to justify in observational settings, the example of a nudge in an encouragement design is more compelling, since a nudge is unlikely to impact the outcome via other plausible mechanisms. 

Because the instrument is binary, our analysis will hinge on a subject's \textit{compliance class}, as in \cite{AIR1996, abadie}. We define a subject to be a \textit{complier} (C) if $W(0) = 0$ and $W(1) = 1$, an \textit{always-taker} (AT) if $W(0) = W(1) = 1$, a \textit{never-taker} (NT) if $W(0) = W(1) = 0$, and a \textit{defier} (D) if $W(0) = 1$ and $W(1) = 0$. 

For our motivating example, a subject corresponds to an admission to a hospital's ED. This includes the patient and any of their covariates, such as their medical history and the physician looking after them.  An always-taker would then be a case severe enough that the physician sends the patient to an ICU regardless of the nudge status. Analogously, a never-taker is likely a mild case, and a complier is likely an intermediate case. 

The compliance status of an observation depends on $X$ but, in our model, is also random.
We define the compliance probability 
$p_{\co}(X) = \bbP(\text{complier} \st X)$, with $p_{\at}(X)$, $p_{\nt}(X)$ and $p_{\mathrm{D}}(X)$
defined analogously for always-takers, never-takers and defiers, respectively. The relevance criterion in Definition~\ref{def:IV} can only be satisfied if $p_{\co}(X)+p_{\mathrm{D}}(X)>0$ at all $X$. 

Defiers correspond to units who actively do the opposite of whatever they are encouraged to do. For example, a user of a tech platform may be dissuaded from opting into a new feature explicitly because of their annoyance at a push notification but would have done so if uncontacted. However, this is often implausible, and so we will assume that there are no defiers. This assumption, which will help with subsequent identifiability results, is often known as monotonicity and is common in the IV literature \citep{imbensangrist1994}. 

\begin{assumption}\label{as:nodefiers} (No defiers) 
For all $X$, $p_{\mathrm{D}}(X)=0$.
\end{assumption}

Note that in some settings, treatment is only open to those receiving the nudge, in which case there are no defiers. This is sometimes called strong treatment access monotonicity \citep{jin2008}.  

We will also need to assume that there are some compliers at every $X$. While this follows from Assumption~\ref{as:nodefiers} and the relevance condition in Definition~\ref{def:IV}, we list it as an assumption in its own right.
\begin{assumption}\label{as:somecompliers} (Some compliers)
For all $X$, $p_{\co}(X)>0$.
\end{assumption}

When there is noncompliance, the average treatment effect (ATE), 
$$\tau = \bbE[Y(1) - Y(0)],$$ 
is no longer identified. Instead, a standard target of interest \citep{imbensangrist1994, AIR1996} is the local average treatment effect (LATE):
$$\tau_{\late} = \bbE[Y(1) - Y(0) \st \text{complier}].$$
The target $\tau_{\late}$ corresponds to the average effect of treatment on those subjects who do respond to the nudge $Z$. Under our assumptions on $Z$, $\tau_{\late}$ can be written as
\begin{align} \label{LATEformula} \tau_{\late} = \bbE_{X \st \co}\left[\frac{\bbE[Y \st X, Z = 1] - \bbE[Y \st X, Z = 0]}{\bbE[W \st X, Z = 1] - \bbE[W \st X, Z = 0]} \right].\end{align}
Here, $X \st \co$ refers to the distribution of $X$ conditional on being a complier. This result is a consequence of Theorem 1 of \cite{frolich2007}, though we reprove it in Appendix \ref{ssec:proofLATE}, for completeness. 

\begin{figure}[t!] 
\centering
\includegraphics[width=8cm]{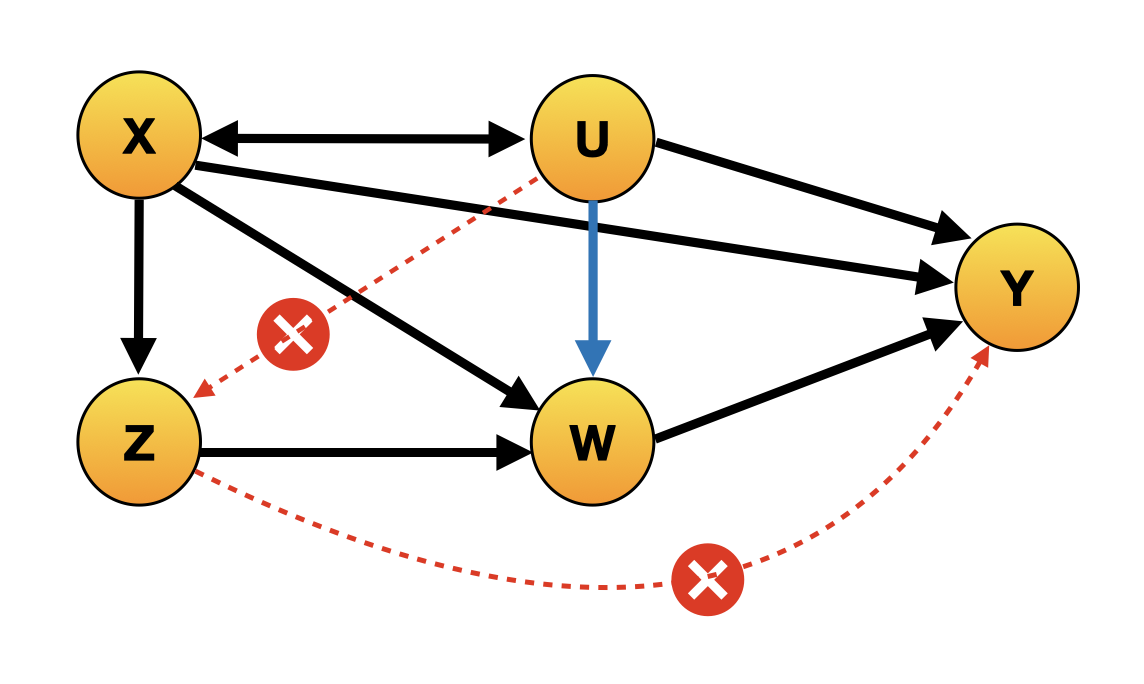}
\caption{Illustration of the directed acyclic graph for our problem. Terms included are: the observed covariates $X$, the unmeasured confounders $U$, the binary instrument $Z$, the treatment status $W$, and the response $Y$. The covariates $X$ and the confounders $U$ can be related in arbitrary ways. The treatment $W$ may be affected by the confounders $U$ (blue arrow), but the binary instrument $Z$ affects the outcome $Y$ only via its effect on $W$. The dashed red arrows indicate forbidden relationships involving $Z$. Figure inspired by a similar diagram from \cite{chandak2023adaptive}.}
\label{fig:dag}
\end{figure}

Of note, the denominator of \eqref{LATEformula} is precisely $p_{\co}(X)$, since the first expectation in the denominator accounts for both compliers and always-takers whereas the second term accounts for only always-takers.

In this paper, we study the effect of the nudge distribution $Z \st X$ on efficient estimation of $\tau_{\late}$. We consider a scenario in which subjects arrive with known covariates $X_i \in \mathcal{X} \subseteq \bbR^d$ and unknown compliance status $C_i \in \{\text{AT}, \text{NT}, \text{C}\}$ drawn IID from some distribution $P_{X \times C}$. Subjects are given a nudge $Z_i \in \{0, 1\}$, then receive a treatment $W_i \in \{0, 1\}$, and finally have a response $Y_i \in \bbR$. The treatment may be confounded with the unobserved noise that also affects the response, as seen in the diagram in Figure \ref{fig:dag}. 

The only aspect of this pipeline that the experimenter can vary is the nudge assignment probability $e_Z(X) = \bbP(Z = 1 \st X)$. The true propensity score $e_W(X) = \bbP(W = 1 \st X)$ is not chosen by the experimenter due to noncompliance, but it is affected by $e_Z(X)$, since
\begin{align}\label{eq:ez2ew}
e_W(X) &= \bbP(W = 1 \st X) \notag\\ 
&= \bbP(\text{always-taker} \st X) + \bbP(\text{complier, } Z = 1 \st X)\notag \\ 
&= p_{\at}(X) + p_{\co}(X)e_Z(X). 
\end{align} 

\section{Methodology} \label{sec:methodology}
In this section, we define our model, which incorporates a linear treatment effect on top of a non-parametric baseline. This ensures that the LATE is linear in the parameter vector, allowing us to use methods from optimal design of experiments. We adapt a modeling approach from \cite{robinson} to our IV setting to estimate the LATE and formulate a constrained optimization problem to select the nudge propensities. We give conditions under which the variance of the LATE using a pilot study converges to the true variance, justifying optimal design with pilot estimates.  We present an algorithm for this procedure and note some properties of the resulting optimal designs.

\subsection{Partially linear model} \label{ssec:plm}
To arrive at a criterion for optimality of $e_Z(X)$, we will use a more specific model than Equation \eqref{generalequation}. We choose a partially linear model: 
\begin{align}\label{partlin} Y = f(X) + W X^{\top}\gamma + \varepsilon. \end{align}
Here, $f$ is a non-parametric baseline model, $\gamma \in \bbR^d$ is a vector of coefficients for the treatment term, and as in Equation \eqref{generalequation}, we allow for $\bbE[\varepsilon \st X, W] \neq 0$ due to the possibility of confoundedness. 

This is a well-studied model in the causal inference literature; see \cite{robinson} and \cite{wagernotes}. A variant of this model has also been studied in the instrumental variables setting (e.g., \cite{florens, chen}) but not, to our knowledge, from a design perspective. For our purposes, it has the desirable property of allowing for flexibility in the baseline model while still preserving some parametric structure in the treatment effect that lends itself to a design criterion. 

Conveniently, the conditional LATE in the partially linear model \eqref{partlin} is $X^{\top}\gamma$, whereas this is not the case for the conditional ATE due to confoundedness. To see this, observe that
\begin{align*} 
\tau_{\late} &= \bbE_{X \st \co}\left[\frac{\bbE[Y \st X, Z = 1] - \bbE[Y \st X, Z = 0]}{p_{\co}(X)} \right] \\ 
&= \bbE_{X \st \co}\left[\frac{(\bbE[W \st X, Z = 1] - \bbE[W \st X, Z = 0])X^{\top}\gamma}{p_{\co}(X)}\right] \\ 
&\quad + \bbE_{X \st \co}\left[\frac{\bbE[\varepsilon \st X, Z = 1] - \bbE[\varepsilon \st X, Z = 0]}{p_{\co}(X)} \right] \\ 
&= \bbE_{X \st \co}\left[\frac{p_{\co}(X)X^{\top}\gamma}{p_{\co}(X)} \right] \\ 
&= \bbE_{X \st \co}[X^{\top}\gamma] \\ 
&= \bbE[X \st \co]^{\top} \gamma.
\end{align*} 
The crucial property is that $\bbE[\varepsilon \st X, Z] = \bbE[\varepsilon \st X]$ because of the conditional exogeneity of the instrument. If we have an estimator $\bar{X}_{\co}$ for $\bbE[X \st \co]$ and an estimator $\hat{\gamma}$ for $\gamma$, this immediately suggests $\bar{X}_{\co}^{\top} \hat{\gamma}$ as an estimator for $\tau_{\late}$. 

\subsection{Estimating the LATE} \label{ssec:LATEest}
We now outline a strategy to estimate $\gamma$ in the partially linear model \eqref{partlin}. Suppose that we have chosen a particular nudge propensity $e_Z(X)$, which we discuss how to do in Section \ref{ssec:design}. After the experiment has concluded, we have access to the full dataset $\{X_i, Z_i, W_i, Y_i\}_{i \leq n}$. 

To estimate $\gamma$ from this dataset, we adapt a method from \cite{robinson} to our IV setting. When there is no confounding, \cite{robinson} showed that we can regress $Y - \bbE[Y \st X]$ on $(W - e_W(X))X$ to consistently estimate the vector $\gamma$. Our setting is more complicated because the lack of compliance induces nonzero conditional means for the errors in such a regression. We address that by subtracting a more complicated expression from $Y$, denoted below by $m^*$. The resulting error term then has mean zero given ($X$, $Z$, $W$), and so the desired $\gamma$ can be estimated via ordinary least squares (OLS).

Recall from equation~\eqref{eq:ez2ew} that our choice of a nudge propensity function $e_Z$ induces the treatment propensity $e_W(X) = p_{\at}(X) + p_{\co}(X) e_Z(X)$. Define the conditional mean functions
\begin{align*} 
m(X, Z, W) &= \bbE[Y \st X, Z, W] = f(X) + WX^{\top}\gamma + \bbE[\varepsilon \st X, Z, W],\quad\text{and} \\ 
m(X, Z) &= \bbE[Y \st X, Z] = f(X) + (p_{\at}(X) + p_{\co}(X)Z)X^{\top}\gamma + \bbE[\varepsilon \st X]. 
\end{align*} 
To avoid ambiguity, the arguments to $m$ will always be explicitly given.  
In addition, let 
\begin{align} \begin{split} \label{eq: m*} m^*(X, Z, W) &= m(X, Z, W) + \frac{e_W(X) - W}{p_{\co}(X)}(m(X, Z = 1) - m(X, Z = 0)) \\ 
&= f(X) + e_W(X) X^{\top}\gamma + \bbE[\varepsilon \st X, Z, W]. \end{split} \end{align} 
Then 
\begin{align} \label{OLSeqn} Y - m^*(X, Z, W) = (W - e_W(X))X^{\top}\gamma + \varepsilon - \bbE[\varepsilon \st X, Z, W]. \end{align}
If we knew the oracle $m^*(X, Z, W)$ and $e_W(X)$, then \eqref{OLSeqn} implies that under mild conditions we could recover an unbiased and consistent estimate of $\gamma$ via OLS:
\begin{align} \label{eq:gamma_oracle}
\hat{\gamma}_{\oracle} = \bigl(X^{\top}D(X,W)^2 X\bigr)^{-1} X^{\top}D(X,W)\bigl(Y - m^*(X, Z, W)\bigr)\end{align}
where $D(X,W)=\diag(W-e_W(X))$.

We do not actually know the quantities $D$ and $m^*$ in the right side of~\eqref{eq:gamma_oracle}. Instead, we can estimate the compliance probabilities $\{p_{\co}(X), p_{\at}(X), p_{\nt}(X)\}$ and the conditional mean functions $m(X, Z, W)$ and $m(X, Z)$ from the full dataset and plug them into \eqref{eq:gamma_oracle} to obtain
\begin{align} \label{eq:gamma_approx}
\hat{\gamma}_{\plugin} = \bigl(X^{\top}\hat D(X,W)^2X\bigr)^{-1} X^{\top}\hat D(X,W)\bigl(Y - \hat{m}^*(X, Z, W)\bigr).\end{align}
Here, $\hat D(X,W)=\diag(W-\hat e_W(X))$ for $\hat{e}_W(X) = \hat{p}_{\at}(X) + \hat{p}_{\co}(X) e_Z(X)$ and
\begin{equation} \label{eq:m*hat} 
\hat{m}^*(X, Z, W) = \hat{m}(X, Z, W) + \frac{\hat{e}_W(X) - W}{\hat{p}_{\co}(X)}\left(\hat{m}(X, Z = 1) - \hat{m}(X, Z = 0)\right).
\end{equation} 

At first glance, it may seem difficult to estimate the compliance probabilities because there are some subjects whose compliance status is unknown; a unit with $(Z, W) = (1, 1)$, for instance, could be either a complier or an always-taker. However, the assumption of no defiers guarantees that, for example, a unit with $(Z, W) = (0, 1)$ is an always-taker. We can then proceed by fitting a pair of logistic regressions to estimate $\bbP(W = 1 \st Z, X)$ for $Z \in \{0, 1\}$. Note that
\begin{align*} \bbP(W = 1 \st Z = 0, X) &=  p_{AT}(X), \\ 
\bbP(W = 1 \st Z = 1, X) &=  p_{AT}(X) + p_C(X). \end{align*} 
We thus use the final estimates
\begin{align*} 
\hat{p}_{\at}(X) &= \hat{\bbP}(W = 1 \st Z = 0, X), \\ 
\hat{p}_{\nt}(X) &= \hat{\bbP}(W = 0 \st Z = 1, X),\quad \text{and} \\ 
\hat{p}_{\co}(X) &= 1 - \hat{p}_{\at}(X) - \hat{p}_{\nt}(X).
\end{align*} 

The next theorem summarizes conditions under which the difference between $\hat{\gamma}_{\plugin}$ and 
$\hat{\gamma}_{\oracle}$ converges in probability to zero for a particular choice $e_Z(X)$ of nudge propensity.

\begin{theorem}\label{thm:gamma-consistency}
Suppose that the following conditions all hold: 
\begin{enumerate} 
\item The propensity scores are estimated sufficiently well:
$$\Vert\hat{e}_W(X) - e_W(X)\Vert_{\infty} = \underset{1 \leq i \leq n}{\sup} \text{ } |\hat{e}_W(X_i) - e_W(X_i)| = o_p(1)$$
as $n\to\infty$.
\item The $m^*(X, Z, W)$ functionals are estimated sufficiently well: 
$$\frac{1}{n} \Vert\hat{m}^* - m^*\Vert^2_{2} = \frac{1}{n} \sum_{i = 1}^{n} (\hat{m}^*(X_i, Z_i, W_i) - m^*(X_i, Z_i, W_i))^2 = o_p(1),$$
as $n\to\infty$.
\item There exist $c, C > 0$ and $n_*\ge d$ with $\Vert X_i\Vert^2_2 \leq C$ for all $1 \leq i \leq n$ and $X^{\top}X/n \succeq cI_{d}$ for all $n\ge n_*$, where $\succeq$ refers to the standard partial ordering on positive semi-definite matrices.
\item There exists $\eta > 0$ such that $e_W(X) = p_{\at}(X) + p_{\co}(X) e_Z(X) \in [\eta, 1 - \eta]$ for all $X \in \mathcal{X}$. 
\end{enumerate} 
For given $(X, Z, W, Y)$ data, let $\hat{\gamma}_{\oracle}$ be the OLS estimate of $\gamma$ fit using oracle parameters as in \eqref{eq:gamma_oracle}, and let $\hat{\gamma}_{\plugin}$ be the OLS estimate fit using approximations as in \eqref{eq:gamma_approx}. Then 
\begin{align*} 
\quad \Vert\hat{\gamma}_{\oracle} - \hat{\gamma}_{\plugin}\Vert_2 \overset{p}{\to} 0 \end{align*}
as $n\to\infty$. 
\end{theorem}
The proof is in Appendix \ref{ssec:proofthmgammaconsistency}. The first two conditions in Theorem \ref{thm:gamma-consistency} concern the accuracy of our plug-in approximations. The third condition is a standard regression design constraint, and the fourth ensures overlap of the treatment and control distributions.
Condition (1) is implied if the analogous conditions hold for each of $\{\hat{p}_{\co}(X), \hat{p}_{\at}(X), \hat{p}_{\nt}(X)\}$. Condition (4) holds, for instance, if Assumption \ref{as:somecompliers} is strengthened to a uniform lower bound on $p_C(X)$ and $e_Z(X)$ is uniformly bounded away from zero and one. 

In Appendix \ref{sec:cross-fitting}, we also outine a variant of this procedure that uses cross-fitting to obtain a $\sqrt{n}$ convergence rate for $\hat{\gamma}_{\plugin}$. This method requires stronger assumptions, is more complicated, and is standard in the cross-fitting literature in causal inference, so we defer it to the appendix for simplicity. 

To estimate $\bbE[X \st \co]$, we note that 
$$\frac{1}{n} \sum_{i = 1}^{n} \frac{X_i p_{\co}(X_i)}{p_{\co}}$$
would be a consistent estimator, where $p_{\co}$ is the marginal probability of being a complier. Of course, the true $p_{\co}(X_i)$ and $p_{\co}$ are not known, so we define the estimator
\begin{equation} \label{eq:Xbarc} \bar{X}_{\co} = \frac{1}{n} \sum_{i = 1}^{n} \frac{X_i \hat{p}_{\co}(X_i)}{\hat{p}_{\co}}. \end{equation}
Here, $\hat{p}_{\co}$ is an estimate of the marginal compliance probability (e.g., by averaging all $\hat{p}_{\co}(X_i)$ terms). This too is consistent under mild assumptions. For instance, if 
$$||\hat{p}_C(X) - \hat{p}_{\co}(X)||_{\infty} = \underset{1 \leq i \leq n}{\sup} \text{ } |\hat{e}_W(X_i) - e_W(X_i)| = o_p(1)$$ 
and the coordinates of $X_i$ are all finite variance, then for $1 \leq j \leq d$:
$$\frac{1}{n} \left|\sum_{i = 1}^{n} \frac{X_{ij}\hat{p}_{\co}(X_i)}{\hat{p}_{\co}} - \frac{X_{ij}p_{\co}(X_i))}{p_{\co}}\right| \leq \frac{1}{n} \underbrace{\left\|\frac{\hat{p}_{\co}(X)}{\hat{p}_{\co}} - \frac{p_{\co}(X)}{p_{\co}}\right\|_{\infty}}_{o_p(1)} 
\cdot
\underbrace{\sum_{i = 1}^{n} |X_{ij}|}_{O_p(n)} = o_p(1).$$
Hence, $\overline{X}_{\co}$ is then consistent as well. 

The estimates $\bar{X}_{\co}$, $\hat{m}^*$, and  $\hat e_W$ can be computed on the experimental data, and we assume above that they are consistent.  However, we may want more accuracy by optimizing the nudge propensities $e_Z$.
In the next section, we explore how to choose a particular $e_Z(X)$ to minimize the variance of this procedure.

\subsection{Designing for nudge assignment} \label{ssec:design}
We now consider the question of how to choose the nudge propensity $e_Z(X)$ for the procedure outlined in the previous section. To obtain a design criterion for choosing $e_Z(X)$, we assume access to a pilot study of size $n_0$ that has been run prior to the main study under some nudge propensities that we denote by $e_Z^0(X)$. In practice, $e_Z^0(X)$ would likely have to satisfy any design constraints imposed on the optimal solution as well.

For this pilot study, we assume that nudges have been assigned and treatment statuses have been recorded. The optimal design will depend only on the pilot estimates of the compliance probabilities, and not on $\hat m^*(X,Z,W)$. As a result, we do not need the response values $Y_i$ from the pilot study at the time we design the main study, which is valuable in settings where $Y_i$ are observed after a long wait. 

We thus assume that we have a pilot dataset $\{(X_i, Z_i, W_i)\}_{i \leq n_0}$ available before the main study. We assume that the covariates and compliance classes come IID from the same joint distribution as in the main study. For notational clarity, we keep this dataset separate from the main study $\{(X_i, Z_i, W_i, Y_i)\}_{i \leq n}$. In the analysis below, the design matrix $X$ and the vectors $\bar{X}_{\co}$, $Z$, and $W$ all refer only to the data in the main study, and not to the pilot, unless otherwise stated. 

To obtain a design criterion, assume for now that $\Var(\varepsilon - \bbE[\varepsilon \st X, Z, W] \st X, Z, W) = \sigma^2$ is constant in $X$, $Z$, and $W$. We discuss generalizations to heteroscedasticity at the end of this section and in further detail in Appendix \ref{sec:WLS}. Using the pilot study, we construct estimates $\{\hat{p}_{\co}(X_i), \hat{p}_{\at}(X_i), \hat{p}_{\nt}(X_i)\}_{i \leq n}$ for the compliance probabilities in the main study, as described in Section \ref{ssec:LATEest}. 

After the pilot study, a natural variance criterion to minimize is 
\begin{align} \label{c-opt}
\Var(\bar{X}_{\co}^{\top} \hat{\gamma} \st X) = \bar{X}_{\co}^{\top} \Var(\hat{\gamma} \st X) \bar{X}_{\co}.
\end{align}
We condition on $X$ in \eqref{c-opt} because the $X$ data is not chosen in our setting but rather given. A design that minimizes \eqref{c-opt} satisfies a particular case of $C$-optimality \citep{atkinson2007optimum}, which minimizes the quadratic form of a vector $c$ and a covariance matrix. Here, $c = \bar{X}_{\co}$. Our machinery also works for other design criteria based on $\Var(\hat{\gamma} \st X)$, such as A-optimality or D-optimality. 

From the formula for $\hat{\gamma}_{\text{oracle}}$ in \eqref{eq:gamma_oracle}, it follows that
$$\Var(\hat{\gamma}_{\oracle} \st X, Z, W) = \sigma^2 (X^{\top}\diag(W - e_W(X))^2X)^{-1}.$$ 
We thus define the oracle variance, which is conditional on only $X$, as
\begin{align*} 
V_{\oracle}(e_Z) &= \Var(\sqrt{n}\, \bar{X}_{\co}^{\top}\hat{\gamma}_{\oracle} \st X) \\ 
&= n \bar{X}_{\co}^{\top}\bbE\bigl[\Var(\hat{\gamma}_{\oracle} \st X, Z, W) \st X\bigr]\bar{X}_{\co} + n\bar{X}_{\co}^{\top}\Var(\bbE[\hat{\gamma}_{\oracle} \st X, Z, W] \st X)\bar{X}_{\co} \\ 
&= n\sigma^2 \bar{X}_{\co}^{\top}\bbE\bigl[(X^{\top}\diag(W - e_W(X))^2X)^{-1} \st X\bigr]\bar{X}_{\co}. \end{align*}
Here, we use that $\hat{\gamma}_{\oracle}$ is unbiased for $\gamma$ to eliminate the second term in the law of total variance. We also scale by $\sqrt{n}$ since the unscaled variance in OLS converges to zero. Because $\sigma^2$ is merely a multiplicative constant,  we take $\sigma^2 = 1$ henceforth without loss of generality. 

With the above assumptions, a technical difficulty remains because there is a matrix inverse inside of an expectation. To handle this, we make our first of two approximations by moving the inverse out of the expectation:
\begin{align*} 
V_{\oracle}(e_Z)
&= n\bar{X}_{\co}^{\top}\bbE[(X^{\top}\diag(W - e_W(X))^2X)^{-1} \st X]\bar{X}_{\co} \\ 
&\approx n\bar{X}_{\co}^{\top}\bbE[X^{\top}\diag(W - e_W(X))^2X \st X]^{-1}\bar{X}_{\co} \\ 
&= n\bar{X}_{\co}^{\top}(X^{\top}\diag(e_W(X)(1 - e_W(X))X)^{-1}\bar{X}_{\co} \\
&= \widetilde V_{\oracle}(e_Z). 
\end{align*} 
We use here that, with $X$ fixed, $\bbE_W[(W - e_W(X))^2 \st X]$ is the variance of a Bern$(e_W(X))$ random variable, so it equals $e_W(X)(1 - e_W(X))$. 

Finally, we make one more approximation to arrive at something computable by replacing the oracle $e_W(X)$ with $\hat{e}_W(X)$ estimated from the pilot study: 
\begin{align*} 
\widetilde V_{\oracle}(e_Z) &= n\bar{X}_{\co}^{\top}\bigl(X^{\top}\diag(e_W(X)(1 - e_W(X))X\bigr)^{-1}\bar{X}_{\co} \\ 
&\approx n\bar{X}_{\co}^{\top}\bigl(X^{\top}\diag(\hat{e}_W(X)(1 - \hat{e}_W(X))X\bigr)^{-1}\bar{X}_{\co} \\ 
&= \widetilde{V}_{\plugin}(e_Z).
\end{align*} 
For given $e_Z(X)$, recall that $\hat{e}_W(X) = \hat{p}_{\at}(X) + \hat{p}_{\co}(X) e_Z(X)$. The function $\widetilde{V}_{\plugin}(e_Z)$
is computable given just the new $X$ data and the compliance probability estimates from the pilot study. Moreover, for a given sample, it is convex in the quantities $\{e_Z(X_1),\dots, e_Z(X_n)\}$, as we verify in Appendix \ref{ssec:proofconvexity}. Hence, it can be efficiently minimized subject to any convex constraints. That is, with $\mathcal{C}_n \subseteq [0, 1]^n$ being the feasible region for $e_Z$ imposed by our convex constraints, we can solve for 
\begin{align}\label{eq:convprogram} \hat{e}_Z^* = \underset{e_Z \in \mathcal{C}_n}{\text{argmin }} \widetilde{V}_{\plugin}(e_Z).
\end{align}
The two approximations that yield a tractable convex optimization problem are visualized in Figure \ref{fig:var-approx}, and we summarize our method in Algorithm \ref{alg:mainalgo}. We emphasize that, once all response data has been collected, the pilot and main study can be combined into a full dataset and used for the estimation procedure described in Section \ref{ssec:LATEest}.

We could analogously define $e_Z^*$ as the design that minimizes the true variance $V_{\oracle}(e_Z)$ over $\mathcal{C}_n$, though there is no way to solve for this. In light of our approximations, a desirable property of our method is that, asymptotically, using $\hat{e}_Z^*$ recovers the optimal true variance $V_{\oracle}(e_Z^*)$ and the oracle estimate $\hat{\gamma}_{\oracle}$. The following theorem confirms that this is the case under certain assumptions. 

\begin{theorem}\label{thm:Vopt}
Suppose that the following conditions all hold: 
\begin{enumerate} 
\item The pilot study has size $n_0 = \Omega(n)$ that is not slower than linear in $n$.
\item The propensity scores are estimated sufficiently well in the pilot study:
$$\Vert\hat{e}_W(X) - e_W(X)\Vert_{\infty} = \underset{1 \leq i \leq n}{\sup} \text{ } |\hat{e}_W(X_i) - e_W(X_i)| = o_p(1)$$
as $n_0 \to\infty$.
\item There exist $c, C > 0$ and $n_*\ge d$ with $\Vert X_i\Vert^2_2 \leq C$ for all $1 \leq i \leq n$ and $X^{\top}X/n \succeq cI_{d}$ for all $n\ge n_*$, where $\succeq$ refers to the standard partial ordering on positive semi-definite matrices.
\item There exists $\eta > 0$ such that $e_W(X) = p_{\at}(X) + p_{\co}(X) e_Z(X) \in [\eta, 1 - \eta]$ for all $X \in \mathcal{X}$ and $e_Z \in \mathcal{C}_n$. 
\end{enumerate} 
Let $e_Z^*$ and $\hat{e}_Z^*$ minimize the true variance $V_{\oracle}(e_Z)$ and estimated variance $\widetilde{V}_{\plugin}(e_Z)$ respectively over $\mathcal{C}_n$. Then 
\begin{align*} |V_{\oracle}(\hat{e}_Z^*) - V_{\oracle}(e_Z^*)| &\overset{p}{\to} 0 \end{align*}
as $n\to\infty$. 
\end{theorem}

\begin{algorithm}[t!]
    \caption{Procedure to choose nudge propensities}\label{alg:mainalgo}
    \begin{algorithmic}[1]
	\STATEx { \textbf{Input:} pilot study $\mathcal{P} = \{X_i, Z_i, W_i\}_{i \leq n_0}$ obtained via $e_Z^0$, new data $X \in \bbR^{n \times d}$, feasible region $\mathcal{C}_n \subseteq [0, 1]^n$ for $e_Z$ defined via convex constraints.}
        \STATE{Using $\mathcal{P}$, construct estimates $\{\hat{p}_{\co}(X_i), \hat{p}_{\at}(X_i), \hat{p}_{\nt}(X_i)\}_{i \leq n}$ of the compliance probabilities.}  
        \STATE{With $\hat{D}(e_Z) = \hat{p}_{\at}(X) + \hat{p}_{\co}(X) e_Z(X)$ and $\bar{X}_{\co}$ as in \ref{eq:Xbarc}, define}
        \STATEx{\qquad $\widetilde{V}_{\plugin}(e_Z) = n \bar{X}_{\co}^{\top} (X^{\top}\hat{D}(e_Z)(1 - \hat{D}(e_Z))X)^{-1}\bar{X}_{\co}$.}
        \STATE{Solve the convex program \eqref{eq:convprogram} for $\hat{e}_Z^* \in \bbR^n$.}
        \STATE{Assign nudges with probabilities $\hat{e}_Z^*$
        to observe $\{Z_i, W_i\}_{i \leq n}$, and collect all responses $Y_i \in \bbR^{n_0 + n}$.}
        \STATE{Merge the pilot and main studies into a single dataset $\{X_i, Z_i, W_i, Y_i\}_{i \leq n_0 + n}$.}
        \STATE{Using the full dataset, construct an estimate $\hat{m}^*(X, Z, W)$ of the function $m^*(X, Z, W)$ defined in \eqref{eq: m*}, and construct improved estimates of the compliance probabilities.}
        \STATE{With $\hat{D}(X, W) = \diag(W - (\hat{p}_{\at}(X) + \hat{p}_{\co}(X) e_Z^*(X))) \in \bbR^{d \times d}$ computed from the full dataset, construct}
        \STATEx {\qquad $\hat{\gamma}_{\plugin} = \bigl(X^{\top}\hat{D}(X, W)^2X\bigr)^{-1}X^{\top}\hat{D}(X, W)\bigl(Y - \hat{m}^*(X, Z, W)\bigr) \in \bbR^d$.}
        \STATEx{}
        \STATEx{\textbf{Output:} $\hat{e}_Z^*$, $\hat{\gamma}_{\plugin}$.}
	\end{algorithmic}
\end{algorithm}

\begin{figure}[t!] 
\centering
\includegraphics[width=8cm]{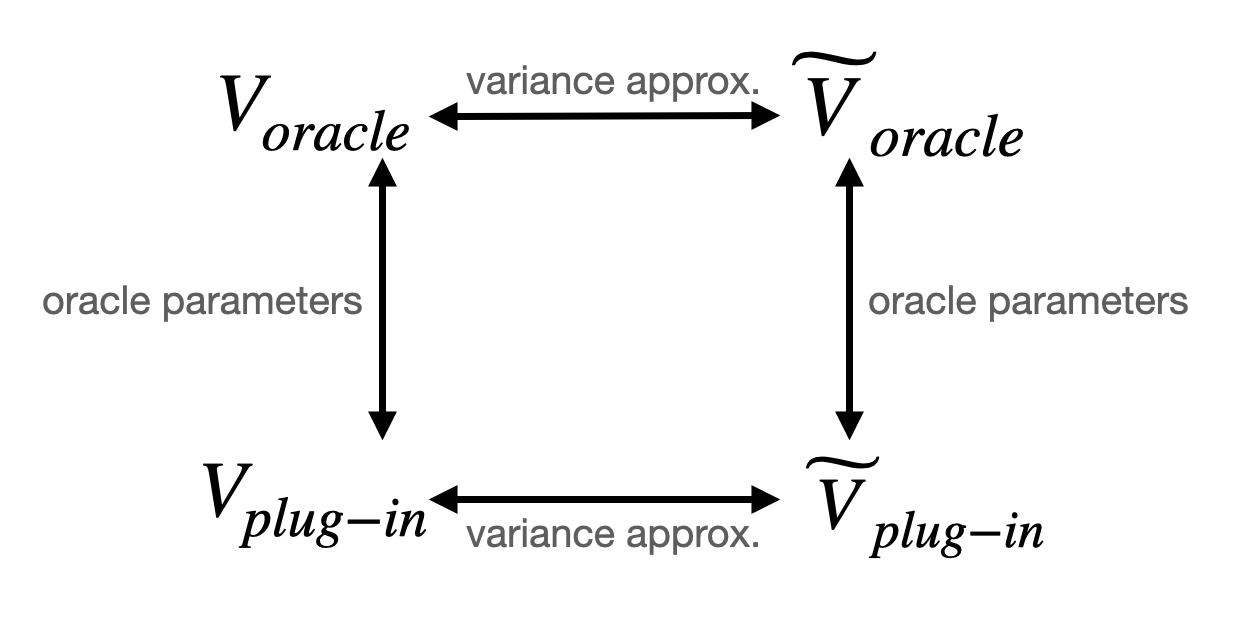}
\caption{Visualization of the approximations that move from $V_{\oracle}$ to $\widetilde{V}_{\plugin}$. The left-right transition involves moving a matrix inverse out of an expectation to approximate the variance, and the top-bottom transition involves replacing oracle compliance probabilities with their estimates from the pilot study.}
\label{fig:var-approx}
\end{figure}

The proof is in Appendix \ref{ssec:proofthmvopt}. Theorem \ref{thm:Vopt} confirms that the approximations made to arrive at a solvable design problem do no harm asymptotically. The main additional assumption relative to Theorem \ref{thm:gamma-consistency} is that the compliance probabilities are estimated sufficiently well not just in the main study population, but in the pilot study. To make this more believable, we also add the assumption that the ratio of the pilot study size and the main study size is bounded below asymptotically. 

It is possible to generalize this design procedure to heteroscedastic errors that depend on $X$ and $W$ by estimating the conditional variances $\sigma^2(X, W) = \Var(\varepsilon \st X, W)$ in the pilot study. These can then be incorporated to change the second-stage regression to a weighted least squares procedure. However, this requires estimating the function $m^*(X, Z, W)$ in the pilot study as well. For ease of exposition, we defer full details of this method to Appendix \ref{sec:WLS}. Of note, it yields a convex variance criterion, as in the homoscedastic case, provided that the variance ratio $\sigma^2(X, W = 1)/\sigma^2(X, W = 0)$ is always bounded in $[1/2, 2]$ (see Theorem \ref{thm:hetero_ratio} in Appendix \ref{sec:WLS}). 

\subsection{Constraints on nudge assignment} \label{ssec:constraints}
A natural question to ask is what the optimal solution $\hat{e}_Z^* \in [0, 1]^n$ looks like under various constraint settings. The results in this section investigate this question in the homoscedastic setting. This question has been considered for fully linear models without noncompliance by \cite{liowenTBD} and \cite{morrisonowen}. They prove that, when a monotonicity constraint is imposed that forces $e_{W}(X_1) \leq e_{W}(X_2) \leq \cdots \leq e_{W}(X_n)$, optimal designs consist of a few discrete levels of constant probability. \cite{liowenTBD} also find that optimal solutions in a simple one-dimensional linear model have a few levels under only budget and gain constraints. We discuss these particular constraints in greater detail in Section \ref{sec:triage}.

As in those papers, we find that the RCT that has $e_W(X) = 1/2$ everywhere is optimal, if attainable, in the unconstrained setting. Conceptually, this is because the $e_W(X)(1 - e_W(X))$ entries in the diagonal matrix in $V_{\plugin}(e_Z)$ are all minimized at $e_W(X) = 1/2$. Under noncompliance, this equates to choosing
\begin{align} \label{eq:RCTtwostage} \hat{p}_{\at}(X) + \hat{p}_{\co}(X)\hat{e}^*_Z(X) = \frac{1}{2} \implies \hat{e}_Z^*(X) = \frac{1 - 2\hat{p}_{\at}(X)}{2\hat{p}_{\co}(X)}. \end{align}
In short, although we cannot choose $e_W(X)$ directly, to minimize variance we would choose $e_Z(X)$ so that we anticipate getting an RCT in the second stage based on our knowledge of the compliance probabilities. 

Equation \eqref{eq:RCTtwostage} implements this strategy by assigning lower nudge propensity where the rate of being an always-taker is high and assigning higher nudge propensity where the rate of being a never-taker is high. However, even in the unconstrained setting, this choice may be impossible if the (estimated) rate of noncompliance is too high, i.e., if 
$$\frac{1 - 2\hat{p}_{\at}(X_i)}{2\hat{p}_{\co}(X_i)} \not \in [0, 1]$$ 
for some $X_i$. In this case, the solution is to threshold to whichever of $0$ or $1$ is closer to this ratio. This choice only accounts for statistical considerations and may be infeasible on economic or ethical grounds.

Interestingly, a similar result holds under a budget constraint, which forces the average estimated treatment probability to be some amount. This can be stated mathematically as
$$\bar{\hat{e}}_W = \frac{1}{n} \sum_{i = 1}^{n} \hat{p}_{\at}(X_i) + \hat{p}_{\co}(X_i) e_Z(X_i) = \mu \in [0, 1].$$ 
This constraint ensures that a fixed fraction of subjects get the nudge in expectation and is useful if there is a limited supply of treatment available. In the case where there is no noncompliance (or, more generally, when the compliance probability $\hat{p}_{\co}(X_i)$ is constant), we find that the optimal solution is the RCT with $\hat{e}_W(X) = \mu$. This result requires that $X$ have an intercept in its span. Interestingly, this differs from the findings in \cite{liowenTBD} and \cite{morrisonowen}, whose objectives and models do not yield constant probabilities once a budget constraint has been added. This difference arises from special properties of the partially linear model and the particular objective function $\widetilde{V}_{\plugin}(e_Z)$ that it induces. We summarize all of this in the theorem below.
\begin{theorem}\label{thm:constraintresults}
Consider the optimization problem 
$$\hat{e}_Z^* = \underset{e_Z \in \mathcal{C}_n}{\arg\min} \text{ } \widetilde{V}_{\plugin}(e_Z),$$
where $\mathcal{C}_n \subseteq [0, 1]^n$ is a region defined by convex constraints and $\widetilde{V}_{\plugin}$ is as in Section \ref{ssec:design}. \\ 
(a) If $\mathcal{C}_n = [0, 1]^n$ is unconstrained, then an optimal solution is to take 
$$\hat{e}_Z^*(X_i) = \begin{cases}\frac{1 - 2\hat{p}_{\at}(X_i)}{2\hat{p}_{\co}(X_i)}, & \hat{p}_{\at}(X_i) \leq \frac{1}{2} \text{ and } \hat{p}_{\at}(X_i) + \hat{p}_{\co}(X_i) \geq \frac{1}{2} \\ 
0, & \hat{p}_{\at}(X_i) > \frac{1}{2} \\ 
1, & \hat{p}_{\at}(X_i) + \hat{p}_{\co}(X_i) < \frac{1}{2}.
\end{cases} $$
(b) If (i) $\mathcal{C}_n = \{e_Z \in [0, 1]^n \st \bar{e}_W = \mu\}$ for some $\mu \in [0, 1]$ and (ii) the compliance probability $\hat{p}_{\co}(X)$ is constant and (iii) $1_n \in \mathrm{span}(X)$, then an optimal solution is to take
$$\hat{e}_Z^*(X_i) = \frac{\mu - \bar{\hat{p}}_{\at}}{\hat{p}_{\co}},$$
where $\bar{\hat{p}}_{\at}$ is the sample average of the always-taker probabilities, provided that this fraction is in $[0, 1]$. 
\end{theorem}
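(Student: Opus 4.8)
The plan is to reduce both parts to the monotonicity and convexity structure of the map $v \mapsto \bar{X}_n^{\top}(X^{\top}\diag(v)X)^{-1}\bar{X}_n$, where I write $v_i = \hat{e}_W(X_i)(1-\hat{e}_W(X_i))$ for the diagonal entries and $M(v) = X^{\top}\diag(v)X = \sum_i v_i X_i X_i^{\top}$. Two structural facts drive everything. First, on positive definite matrices the map $M \mapsto \bar{X}_n^{\top}M^{-1}\bar{X}_n$ is non-increasing in the Loewner order, and since $M(v)$ is linear and increasing in each $v_i$, the objective is non-increasing in each $v_i$. Second, as $e_Z(X_i)$ ranges over $[0,1]$ the induced $\hat{e}_W(X_i) = \hat{p}_{\at}(X_i) + \hat{p}_{\co}(X_i)e_Z(X_i)$ ranges over $[\hat{p}_{\at}(X_i), \hat{p}_{\at}(X_i)+\hat{p}_{\co}(X_i)]$, and $v_i$ is a concave parabola in $\hat{e}_W(X_i)$ peaking at $\hat{e}_W=1/2$.

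For part (a) the feasible region is a product set, so each $\hat{e}_W(X_i)$, equivalently each $v_i$, may be chosen independently. Because the objective decreases in each $v_i$, I would maximize each $v_i$ separately over its interval. The parabola $\hat{e}_W(1-\hat{e}_W)$ is maximized at the point of the interval closest to $1/2$, giving exactly the three cases in the statement: the interior value $\hat{e}_W=1/2$, hence $e_Z = (1-2\hat{p}_{\at})/(2\hat{p}_{\co})$, when $1/2$ lies in the interval, and the nearer endpoint $e_Z=0$ or $e_Z=1$ otherwise. Writing $v^*$ for this componentwise maximizer, any feasible $v$ satisfies $v\preceq v^*$ entrywise, hence $M(v)\preceq M(v^*)$ and $\bar{X}_n^{\top}M(v)^{-1}\bar{X}_n \ge \bar{X}_n^{\top}M(v^*)^{-1}\bar{X}_n$; optimality of $v^*$ follows, with invertibility holding since $X$ has full column rank and each $v_i^*>0$ under Assumption~\ref{as:somecompliers}.

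For part (b) the extra structure is that a constant $\hat{p}_{\co}$ makes the budget constraint $\bar{e}_W=\mu$ an affine constraint on $e=(\hat{e}_W(X_1),\dots,\hat{e}_W(X_n))$, namely $\tfrac1n\sum_i \hat{e}_W(X_i)=\mu$, and makes $e$ an affine reparametrization of $e_Z$. The objective is convex in $e_Z$ (as already established for $\wh{V}_{\appr}$), hence convex in $e$, so it suffices to verify the KKT stationarity condition at the candidate $\hat{e}_W(X_i)\equiv\mu$. Differentiating gives $\partial_{e_i}\big[\bar{X}_n^{\top}M^{-1}\bar{X}_n\big] = -(1-2e_i)\big(X_i^{\top}M^{-1}\bar{X}_n\big)^2$, where $M = X^{\top}\diag(e_i(1-e_i))X$. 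At $e\equiv\mu$ we have $M = \mu(1-\mu)X^{\top}X$, so the gradient equals $-\tfrac{1-2\mu}{\mu^2(1-\mu)^2}\big(X_i^{\top}(X^{\top}X)^{-1}\bar{X}_n\big)^2$. The role of $1_n\in\mathrm{span}(X)$ is now decisive: writing $1_n=Xa$ gives $\bar{X}_n=\tfrac1n X^{\top}Xa$, so $(X^{\top}X)^{-1}\bar{X}_n = a/n$ and $X_i^{\top}(X^{\top}X)^{-1}\bar{X}_n = (Xa)_i/n = 1/n$ for every $i$. Hence the gradient is constant across $i$, matching the single Lagrange multiplier of the linear equality constraint; convexity then upgrades this feasible stationary point to a global minimizer, and $\hat{e}_W\equiv\mu$ is exactly the stated $e_Z$ design.

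The main obstacle is the stationarity verification in part (b), and specifically the leverage identity $X_i^{\top}(X^{\top}X)^{-1}\bar{X}_n = 1/n$. This is where hypotheses (ii) and (iii) are both essential: constancy of $\hat{p}_{\co}$ keeps $v_i$ a single concave function of a budget-linear variable, and $1_n\in\mathrm{span}(X)$ equalizes the leverage terms so the gradient is flat across observations. Without either, the candidate would generically fail to be stationary, consistent with the contrast the paper draws with \cite{liowenTBD} and \cite{morrisonowen}. The remaining work is bookkeeping: checking invertibility (full column rank of $X$ together with positive $v_i$) and confirming that the box constraints are inactive at the candidate, which holds under the stated proviso that the design lies in $[0,1]$.
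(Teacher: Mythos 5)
Your proposal is correct and follows essentially the same route as the paper: part (a) via Loewner monotonicity of $M \mapsto \bar{X}_n^{\top}M^{-1}\bar{X}_n$ plus coordinatewise maximization of the parabola $\hat{e}_W(1-\hat{e}_W)$ over its attainable interval, and part (b) via checking KKT stationarity at the constant-$\hat{e}_W$ candidate, where $1_n \in \mathrm{span}(X)$ forces the gradient entries to coincide. Your leverage computation $X_i^{\top}(X^{\top}X)^{-1}\bar{X}_n = 1/n$ is just a different phrasing of the paper's hat-matrix identity $1_n^{\top}H\hat{D}_i H 1_n = 1_n^{\top}\hat{D}_i 1_n$, and your explicit appeal to convexity to promote stationarity to global optimality is implicit in the paper's argument.
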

Theorem \ref{thm:constraintresults} provides intuition about the nature of optimal solutions under minimal constraints. In the unconstrained case, an increase in $\hat{p}_{\at}(X_i)$ coincides with a decrease in $\hat{e}_Z^*(X_i)$ to try to balance the number of treated and control units, for instance.

\section{Triage example}\label{sec:triage} 
In this section, we outline an example application of this procedure that is motivated by the problem of patient triage in emergency departments. When ED patients are admitted to the hospital, they are assigned a level of care based on their perceived need and bed availability, with the highest level of care being an intensive care unit (ICU). Because ICUs require greater resources and are constrained in their bed and nursing availability, a natural question of interest is the causal effect of being sent to the ICU on a patient's health outcomes. This question could impact decisions about both the size of the ICU \citep{phua2020lessmore} and future triaging, which may be able to exploit the heterogeneous effects of treatment \citep{chang2017icu}.

Because full randomization would be highly unethical, we consider a semi-synthetic example based on a study design in which the experimenter assigns a given patient to either the ICU ($Z = 1$) or less intensive care ($Z = 0$) and their doctor can override if deemed necessary. This example thus falls naturally into our framework because the experimenter's proposed triage is a nudge $Z$ that can depend on the patient's observed covariates $X$ and that need not agree with the final triage assignment $W$. Moreover, it is natural to expect that patients whose encouragement is overridden by the physician likely differ in some unmeasured way, introducing confoundedness.
 
We use a de-identifed electronic health records dataset that consists of $7281$ adult inpatient admissions at an academic hospital and a trauma center between 2015 and 2019. This dataset includes each patient's length of stay in the ED. We take this to be our response variable of interest, with a lower length of stay indicating a preferred outcome. It also includes several covariates about the patient, including their demographic information and various vital signs. 

To these variables, we add the risk score developed by \cite{nguyen2021}. This is a machine learning-based feature that was constructed via an ensemble of gradient-boosted decision trees, which input a patient's covariates and output their predicted probability of being sent to the ICU within the next three hours.  This variable is useful for such an analysis because it provides an estimate of whether the patient will eventually need to go to the ICU. In our motivating setting, this would be a natural variable to use to constrain the experimental design, with high-risk patients having a higher probability of encouragement than low-risk patients. For this reason, we return to it when imposing constraints later in our analysis. 

Because our dataset is observational and does not come from the proposed study design, we do not have access to genuine noncompliance data, and our example is necessarily semi-synthetic. Nonetheless, we use the observed data to fit sensible models for $Y$ given $X$ and $W$, which we take as a starting point for our analysis. Specifically, we proceed as follows: 
\begin{enumerate} 
\item Use the non-ICU data ($W = 0$) to fit a flexible baseline model $Y \sim f(X)$ via a random forest. Here, $Y \in \bbR^n$ is the length of stay (in hours) for $n$ patients and $X \in \bbR^{n \times 7}$ is a dataset containing a subset of covariates deemed particularly important. These are: hours in the ED prior to triage, weight, age, systolic blood pressure (SBP), pulse, risk score, and an intercept. One could also use a much richer class of covariates in this step and restrict to fewer for the linear model to follow, but for simplicity we opted to keep the dimensions the same throughout.    
\item Use the ICU data ($W = 1$) to fit a regression model $Y - f(X) \sim WX\gamma$, where $X \in \bbR^{n \times 7}$ is as above.  
\item Choose functions $\{p_{\co}(X), p_{\at}(X), p_{\nt}(X)\}$ for the noncompliance probabilities and assign compliance classes to patients via these probabilities. These are entirely synthetic, since there were no nudges and hence no notion of noncompliance in the data. For simplicity and interpretability, we choose these to depend only on the risk score of \cite{nguyen2021}. We base our choice on the heuristic that it is natural to expect higher rates of noncompliance at extremes of this risk score. Full formulas are available in Appendix \ref{sec:expdetails}.
\item Choose a distribution for the error term $\varepsilon$ that depends on compliance class. We take $\varepsilon$ to include additive terms for each noncompliance class, i.e., 
$$\varepsilon_i \overset{\mathrm{ind}}{\sim} \indic\{\text{AT}_i\}\gamma_{\at} + \indic\{\text{NT}_i\}\gamma_{\nt} + N(0, \sigma^2).$$
The particular coefficients $\gamma_{\at}$ and $\gamma_{\nt}$ are free parameters that tune how much the noncompliance affects outcomes. In addition, $\sigma^2$ is a noise parameter that can be tuned.
\end{enumerate} 
Further details about the simulation procedure are available in Appendix \ref{sec:expdetails}. 

In Figure \ref{fig:opt-sols}, we present the optimal solution $\hat{e}_Z^*$ under an array of constraint settings interpreted below. We used $n = 6000$. The risk scores in that figure have been rank transformed to have a uniform distribution. The constraints imposed, in various combinations, are:
\begin{enumerate} 
\item No constraints, i.e., $\mathcal{C}_n = [0, 1]^n$. Here, the solution is known perfectly via Theorem \ref{thm:constraintresults} and convex optimization is not needed. 
\item Monotonicity constraint that $\hat{e}_Z(X)$ is nondecreasing in the risk score $r(X)$. This serves as a fairness constraint to force the nudge propensity to be no lower for a higher risk patient than for a lower risk one.
\item Budget constraint with $\mathcal{C}_n = \{e_Z \in [0, 1]^n \st \overline{p_{\at}} + \overline{p_{\co} e_Z} = 0.3\}$. Under this constraint, the expected proportion of subjects treated is $30\%$.
\item Gain constraint that 
$$\sum_{i = 1}^{n} e_Z(X_i) r(X_i) \geq \rho \sum_{i = k}^{n} r(X_{(i)}),$$ 
where $\rho$ is a free parameter and the parenthetical subscript on the right-hand side indicates that we sort risk scores from smallest to largest. In words, the expected total risk among those patients encouraged is at least some fraction $\rho$ times what it would be under a regression discontinuity design (RDD) in which the encouragement is given deterministically to the patients with the highest risk score. The lower summand $k$ on the right-hand side is dictated by the other constraints of the problem; for example, under a budget constraint, $k$ would be such that the average encouragement probability across the whole population of patients is as stipulated. 

This constraint simulates one that might arise in a real setting, in which there is some expected benefit to be derived by giving the encouragement to particular units over others. Here, the risk score of \cite{nguyen2021} is used as a proxy for this benefit. If the $\rho$ parameter were brought closer to $1$, the result would tend to a regression discontinuity design in which precisely the top patients by risk score were given the encouragement. 
\end{enumerate} 

Figure \ref{fig:opt-sols} shows the graphs for optimal solutions in these settings. The unconstrained solution is clearly inappropriate in a real ICU, since it assigns the highest nudge propensities to cases with the lowest risk scores and vice versa. This optimizes statistical efficiency by assigning more encouragements to patients more likely to be never-takers and fewer to patients more likely to be always-takers in order to arrive at something closer to an RCT. However, it recommends the ICU more for those who need it least or not at all, and it also does not account for the limited capacity of the ICU. Adding a monotonicity constraint flattens the curve. 

\begin{figure}[t!] 
\centering
\includegraphics[width=12cm]{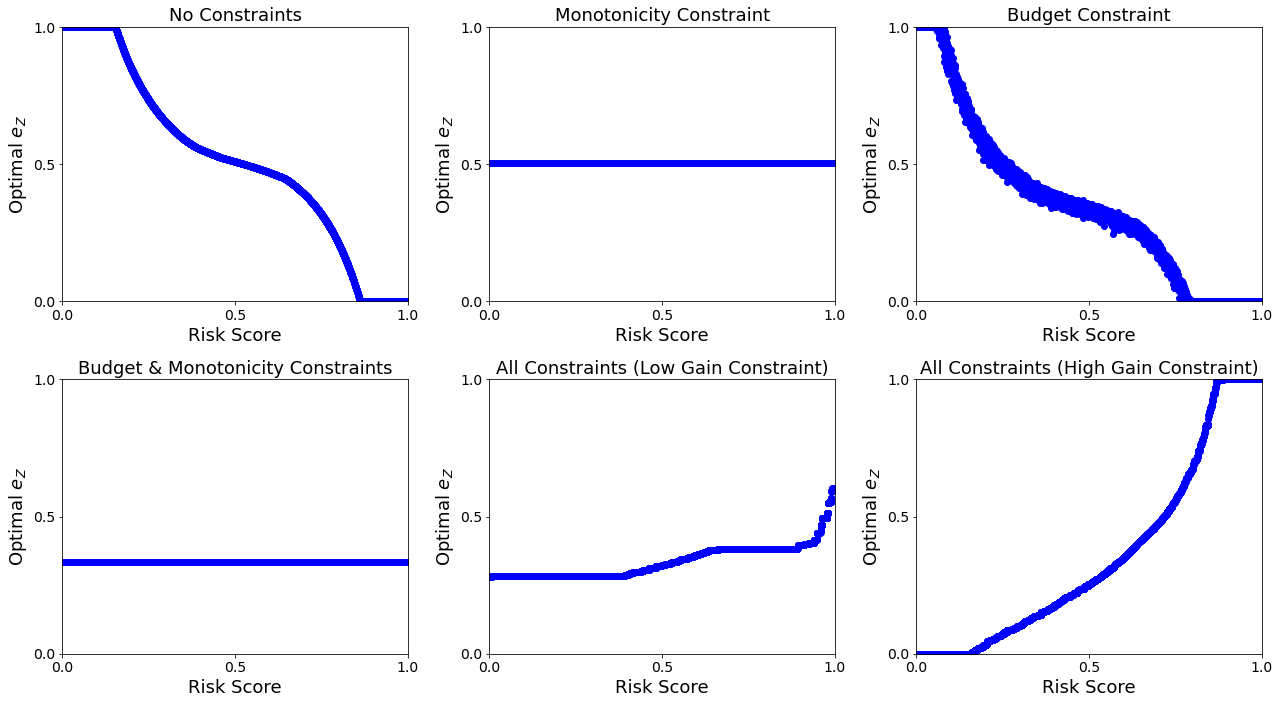}
\caption{Results of our convex optimization procedure under an array of constraint settings: (a) unconstrained, (b) monotonicity constraint, (c) budget constraint, (d) budget and monotonicity constraints (e) budget, monotonicity, and low gain constraint, and (f) budget, monotonicity, and high gain constraint. All solutions besides (a) are obtained via the CVXPY package \citep{cvxpy}.} 
\label{fig:opt-sols}
\end{figure}

The budget constraint, which allows only $40\%$ expected treatment, depresses encouragement probabilities, though the benefit remains to encourage lower risk patients more often to combat the fact that there are more never-takers in that regime. We note that a budget constraint makes SUTVA more believable in this problem; if a busier ICU provides a lower quality of care, then one patient's ICU admission status could affect the outcome of another patient. However, a budget constraint ensures that the ICU stays at a roughly fixed capacity regardless of which particular patients are treated. 

Finally, the last two panels add, respectively, a ``low" gain constraint with $\rho = 0.85$ and a ``high" gain constraint with $\rho = 0.95$. As anticipated, the solution moves closer to an RDD as $\rho$ tends to one. We also compare the values of the objective for each solution in Appendix \ref{sec:expdetails}. We find that the constrained problems have values of the objective function that are between $6\%$ and $87\%$ larger than that of the unconstrained problem. 

Separately, in Figure \ref{fig:baseline-comp} we assess the variance and mean-squared error (MSE) of our procedure to estimate $\tau_{\late}$. We do so under both a budget constraint that $\overline{\hat{e}}_W = 0.4$ and a monotonicity constraint to capture bed availability and fairness constraints that might naturally arise. As a baseline, we compare our optimal design $\hat{e}_Z^*$ to the RDD that gives encouragement deterministically to the highest risk patients, with the cutoff determined by satisfying the same $40\%$ budget constraint. This RDD is a natural comparison point that one might default to in a hospital if not using our method. At each of $n \in \{1000, 2000, 3000, 4000, 5000, 6000\}$, we compute $\hat{\tau}_{\late}$ under both our design and the baseline design. We repeat this over $250$ simulations. Finally, we compute the sample variance and MSE of each procedure and compare. 

These results are presented in Figure \ref{fig:baseline-comp}, which plots the variances and MSEs at several sample sizes. We compare the estimators from our method with no constraints and with the budget constraint at $\bar{e} = 0.4$, as well as the RDD at the corresponding budget. We see that our method consistently outperforms the baseline RDD in both variance and MSE across all values of $n$ tested. 

The results in Figure \ref{fig:baseline-comp} suggest that our method is capable of producing meaningful gains over less thoughtful designs. Moreover, when more constraints are added, it becomes less clear what a baseline design choice should even be, since identifying a feasible design becomes non-obvious.

\begin{figure}[t!]
  \centering
  \begin{subfigure}{0.48\textwidth}
    \centering
    \includegraphics[width=\textwidth]{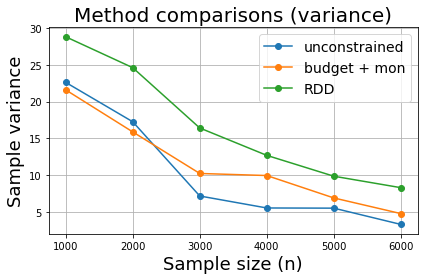}
  \end{subfigure}
  \hfill
  \begin{subfigure}{0.48\textwidth}
    \centering
    \includegraphics[width=\textwidth]{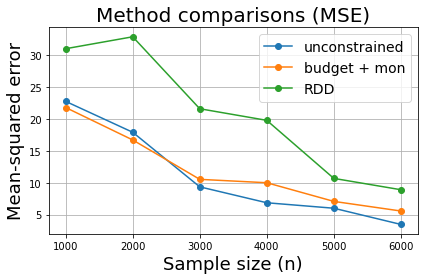}
  \end{subfigure}
  \caption{Sample variance and sample MSE of several designs. The plot shows our method with no constraints, our method with the budget constraint of $\overline{\hat{e}}_W = 0.4$ and a monotonicity constraint, and the RDD at the same budget constraint.}
  \label{fig:baseline-comp}
\end{figure}

\section{Discussion} \label{sec:discussion}
In this paper, we introduce a method to design for optimal assignment of a binary ``nudge" in a randomized encouragement design. This problem arises in a host of settings -- such as medical experiments and testing certain features on tech platforms -- where the experimenter cannot directly force a particular treatment assignment. In such cases, it is useful to be able to reason about how best to assign a nudge to improve statistical efficiency. 

Our method assumes a partially linear model, with a nonparametric baseline and a linear treatment term, which balances flexibility with optimal design compatibility. We also allow for the incorporation of convex constraints, which can capture various economic and ethical considerations that arise in real problems. In particular, a budget constraint can handle the case in which there is a finite supply of treatment available, such as limited bed capacity in an ICU. 

For an optimal design, we have used a two-stage strategy with a pilot study to estimate the compliance probabilities. One could usefully consider a multi-stage strategy where the compliance probabilities are updated on the fly.  Optimizing that approach is outside the present scope.

In Section \ref{sec:triage}, we have also presented a semi-synthetic example involving randomized encouragement to an ICU. We emphasize that this example is a proof of concept rather than an exact demonstration of what such a study would look like. In practice, the constraints that arise may be more complicated in a variety of ways, such as time-dependence. Nevertheless, we view this example as a promising demonstration of the benefits that our method can provide.

\section*{Acknowledgments} 
The authors thank John Cherian, Isaac Gibbs, Samir Khan, Harrison Li, and Anav Sood for helpful discussions and comments. T.\ M.\ was partly supported by a B.\ C.\ and E.\ J.\ Eaves Stanford Graduate Fellowship. M.\ N.\ was supported by the Stanford DARE fellowship. A.~B.~O.\ was supported by  grant DMS-2152780 from the U.S.\ National Science Foundation.

\bibliographystyle{plainnat}
\bibliography{ICU}

\appendix
\section{Proofs}\label{sec:proofs}

\subsection{Proof of LATE identifiability}\label{ssec:proofLATE} 
Here, we reprove Equation \eqref{LATEformula} for completeness. We have 
\begin{align*} \tau_{\late} &= \bbE[Y(1) - Y(0) \st \text{complier}] \\ 
&= \bbE_{X \st \co}\left[\frac{\bbE[(Y(1) - Y(0))\indic\{\text{complier}\} \st X]}{p_{\co}(X)}\right] \quad \text{(Bayes' theorem)} \\ 
&= \bbE_{X \st \co}\left[\frac{\bbE[Y(W(1)) - Y(W(0)) \st X]}{p_{\co}(X)}\right] \quad \text{(No defiers)} \\ 
&=  \bbE_{X \st \co}\left[\frac{\bbE[Y(W(1)) \st Z = 1, X] - \bbE[Y(W(0)) \st Z = 0, X]}{p_{\co}(X)}\right] \quad \text{(Conditional exogeneity)} \\ 
&= \bbE_{X \st \co}\left[\frac{\bbE[Y \st X, Z = 1] - \bbE[Y \st X, Z = 0]}{\bbE[W \st X, Z = 1] - \bbE[W \st X, Z = 0]} \right] \text{(Exclusion)}.
\end{align*}
The third line uses Assumption~\ref{as:nodefiers} that there are no defiers, so the only units for which $Y(W(1)) \neq Y(W(0))$ are compliers. Exclusion and conditional identifiability come from Definition~\ref{def:IV} for our instrumental variable. The relevance condition ensures that $p_{\co}(X) > 0$ everywhere. We used Assumption~\ref{as:somecompliers} throughout to keep 
$p_{\co}(X) > 0$. \hfill$\blacksquare$

\subsection{Proof of convexity of \texorpdfstring{$\widetilde{V}_{\plugin}(e_Z)$}{Proof of convexity of widetilde V plug-in (eZ)}} \label{ssec:proofconvexity}
Recall that 
$$\widetilde{V}_{\plugin}(e_Z) = n \bar{X}_{\co}^{\top} \left(X^{\top} \text{diag}(\hat{e}_W(X)(1 - \hat{e}_W(X))X\right)^{-1}\bar{X}_{\co}.$$
For fixed vector $x$ and matrix $X$, let 
\begin{align*} f(v) &= x^{\top}(X^{\top} \diag(v(1 - v))X)^{-1} x \\
&=: x^{\top} (X^{\top} D(v) X)^{-1} x. \end{align*} 
For our problem, $v = \hat{p}_{\at} + \hat{p}_{\co} e_Z$, and we are optimizing with respect to $e_Z$. However, this is just a linear offset, so it suffices to prove convexity of $f(v)$.

Note that $D(v)$ is concave in $v$ in the sense that 
$$D(\lambda u + (1 - \lambda) v) \succeq \lambda D(u) + (1 - \lambda) D(v),$$
which follows by concavity of the map $v \mapsto v(1 - v)$. Then under the standard partial ordering on positive semidefinite matrices, 
$$X^{\top} D(\lambda u + (1 - \lambda) v)X \succeq \lambda X^{\top} D(u) X + (1 - \lambda) X^{\top} D(v) X$$
and
$$(X^{\top} D(\lambda u + (1 - \lambda) v)X)^{-1} \preceq (\lambda X^{\top} D(u) X + (1 - \lambda) X^{\top} D(v) X)^{-1}.$$
Hence,  
\begin{align*} 
f(\lambda u + (1 - \lambda)v) &= x^{\top} (X^{\top} D(\lambda u + (1 - \lambda) v) X)^{-1} x \\ 
&\leq x^{\top} (\lambda X^{\top} D(u)X + (1 - \lambda) X^{\top}D(v) X)^{-1} x \\
&\leq \lambda x^{\top} (X^{\top} D(u) X)^{-1} x + (1 - \lambda) x^{\top} (X^{\top} D(v) X)^{-1} x \\ 
&= \lambda f(u) + (1 - \lambda) f(v).
\end{align*} 
In the last inequality, we use the convexity of the matrix inverse for positive definite matrices: 
$$(\lambda A + (1 - \lambda)B)^{-1} \preceq \lambda A^{-1} + (1 - \lambda) B^{-1}.$$ 
$\widetilde{V}_{\plugin}(e_Z)$ is thus convex in the vector $e_Z$. \hfill$\blacksquare$

\subsection{Proof of Theorem \ref{thm:Vopt}}\label{ssec:proofthmvopt}
Note that we present the proof of Theorem \ref{thm:Vopt} before the proof of Theorem \ref{thm:gamma-consistency} because the latter will use some results derived in the former. 

In light of the approximations in Figure \ref{fig:var-approx}, we begin with two lemmas showing that 
\begin{compactenum}[\qquad 1)]
\item $\underset{e_Z \in \mathcal{C}_n}{\sup} \text{ } |V_{\oracle}(e_Z) - \widetilde V_{\oracle}(e_Z)| \to 0$, \quad\text{and}
\item $\underset{e_Z \in \mathcal{C}_n}{\sup} \text{ } |\widetilde V_{\oracle}(e_Z) - \widetilde{V}_{\plugin}(e_Z)| \overset{p}{\to} 0$. 
\end{compactenum}
Lemma~\ref{lemma:firstapprox} shows that moving the inverse out of the
expectation to get a more tractable expression makes an asymptotically negligible difference. Lemma~\ref{lemma:secondapprox} then shows that plugging in sample estimates makes an asymptotically negligible difference to this more tractable variance expression.

\begin{lemma} \label{lemma:firstapprox}
 Under the conditions of Theorem~\ref{thm:Vopt},   
 $$\underset{e_Z \in \mathcal{C}_n}{\sup} \text{ } |V_{\oracle}(e_Z) - \widetilde V_{\oracle}(e_Z)| =O(n^{-1/2})$$
as $n\to\infty$, where the implied constant depends only
on $\eta$, $c$ and $C$.
\end{lemma}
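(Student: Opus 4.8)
The plan is to pass to the rescaled matrices $A = \tfrac{1}{n}X^{\top}D X$ and $\bar{A} = \bbE_{Z,W}[A]$, where $D = \diag\bigl((W_i - e_W(X_i))^2\bigr)$. Since $\bbE[(W_i - e_W(X_i))^2 \st X] = e_W(X_i)(1 - e_W(X_i))$, we get $\bar{A} = \tfrac{1}{n}X^{\top}\diag(e_W(X)(1-e_W(X)))X$, and the two objectives collapse to $V(e_Z) = \bar{X}_n^{\top}\bbE_{Z,W}[A^{-1}]\bar{X}_n$ and $V_{\appr}(e_Z) = \bar{X}_n^{\top}\bar{A}^{-1}\bar{X}_n$, the factors of $n$ having cancelled. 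Because $X$ is held fixed, the difference is the deterministic quantity $V - V_{\appr} = \bbE_{Z,W}\bigl[\bar{X}_n^{\top}(A^{-1} - \bar{A}^{-1})\bar{X}_n\bigr]$, so it suffices to control $A^{-1} - \bar{A}^{-1}$ in the quadratic form against $\bar{X}_n$.

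The key observation, and the step I expect to matter most, is that the binary structure of $W$ together with the overlap Assumption~(5) pins the diagonal of $D$ away from zero \emph{deterministically}: since $W_i \in \{0,1\}$ and $e_W(X_i) \in [\eta, 1-\eta]$, we have $(W_i - e_W(X_i))^2 \ge \eta^2$ for every realization. Combined with $X^{\top}X/n \succeq cI_d$ (Assumption~(4)) this gives $A \succeq \eta^2 c\, I_d$ surely, and likewise $\bar{A} \succeq \eta^2 c\, I_d$ (here $e_W(1-e_W)\ge\eta(1-\eta)\ge\eta^2$ since $\eta\le 1/2$), while $\|X_i\|^2 \le C$ bounds both above by $C I_d$ through a trace estimate. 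Crucially these spectral bounds hold uniformly over $e_Z \in \mathcal{C}_n$ and for every draw of $(Z,W)$, so $\|A^{-1}\|_{\op}, \|\bar{A}^{-1}\|_{\op} \le (\eta^2 c)^{-1}$ with no exceptional event. This is exactly what lets me avoid the delicate problem of controlling $\bbE[A^{-1}]$ near a singularity, the usual obstruction when pushing an inverse through an expectation.

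With these bounds I would apply the resolvent identity $A^{-1} - \bar{A}^{-1} = -\bar{A}^{-1}(A - \bar{A})A^{-1}$ and the submultiplicative estimate $|\bar{X}_n^{\top}(A^{-1}-\bar{A}^{-1})\bar{X}_n| \le \|\bar{X}_n\|^2\,\|\bar{A}^{-1}\|_{\op}\,\|A^{-1}\|_{\op}\,\|A - \bar{A}\|_{\op} \le C(\eta^2 c)^{-2}\|A - \bar{A}\|_{\op}$, using $\|\bar{X}_n\|^2 \le C$. It then remains to show $\bbE_{Z,W}\|A - \bar{A}\|_{\op} = O(n^{-1/2})$. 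Writing $A - \bar{A} = \tfrac{1}{n}\sum_i \delta_i X_i X_i^{\top}$ with $\delta_i = (W_i - e_W(X_i))^2 - e_W(X_i)(1-e_W(X_i))$ independent across $i$, mean zero, and taking values in $[-1,1]$, I would pass to the Frobenius norm via $\|\cdot\|_{\op}\le\|\cdot\|_F$, apply Jensen, and compute $\bbE\|A-\bar{A}\|_F^2 = \tfrac{1}{n^2}\sum_i \Var(\delta_i)\|X_i\|^4 \le C^2/(4n)$ after the cross terms vanish by independence. Hence $\bbE_{Z,W}\|A - \bar{A}\|_{\op} \le \sqrt{\bbE\|A-\bar{A}\|_F^2} = O(n^{-1/2})$.

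Combining the last two estimates yields $|V(e_Z) - V_{\appr}(e_Z)| = O(n^{-1/2})$ with a constant depending only on $\eta$, $c$, and $C$; since none of these controlling quantities depends on $e_Z$, the supremum over $\mathcal{C}_n$ is immediate and no covering argument is needed. I note in passing that the first-order term $\bbE[\bar{A}^{-1}(A-\bar{A})\bar{A}^{-1}]$ vanishes because $\bbE[A] = \bar{A}$, so a second-order expansion of the resolvent would actually deliver the sharper rate $O(n^{-1})$; the cruder $O(n^{-1/2})$ obtained above already suffices for the statement and for its subsequent use in Theorem~\ref{thm:Vopt}.
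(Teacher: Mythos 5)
Your proof is correct, and it shares the paper's skeleton: both arguments hinge on the resolvent identity $A^{-1}-\bar{A}^{-1}=-\bar{A}^{-1}(A-\bar{A})A^{-1}$ together with the same key structural observation that $W_i\in\{0,1\}$ and the overlap condition force $(W_i-e_W(X_i))^2\ge\eta^2$ on every realization, so that $\Vert A^{-1}\Vert_{\op}$ and $\Vert\bar{A}^{-1}\Vert_{\op}$ are bounded by $(\eta^2c)^{-1}$ deterministically (for $n\ge n_*$) and uniformly over $e_Z\in\mathcal{C}_n$, which is what makes pushing the inverse through the expectation safe. Where you genuinely depart from the paper is the concentration step. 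The paper controls $\bbE\bigl[\Vert M-\bbE[M]\Vert_{\op}\bigr]$ by applying the matrix Hoeffding inequality of Tropp to the summands $M_i$ and integrating the resulting tail bound, obtaining $d\sqrt{8\pi nC^2}$; you instead bound the operator norm by the Frobenius norm, apply Jensen, and compute the second moment directly, with the cross terms killed by independence. Your route is more elementary (no matrix concentration inequality is needed) and it produces an explicit constant of order $C^2/(\eta^4c^2)$ with no dimension factor, whereas the paper's bound carries a factor of $d$ from matrix Hoeffding; since the lemma asserts that the implied constant depends only on $\eta$, $c$, and $C$, your argument matches the statement more literally. What the paper's heavier tool buys is an exponential tail bound on $\Vert M-\bbE[M]\Vert_{\op}$, but only a first-moment bound is actually used. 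Your closing observation is also correct and worth noting: because $\bbE[A]=\bar{A}$, the linear term in a second-order resolvent expansion has mean zero, and the remainder is controlled by $\bbE\Vert A-\bar{A}\Vert_{\op}^2\le\bbE\Vert A-\bar{A}\Vert_F^2=O(n^{-1})$, so the lemma's rate could be sharpened to $O(n^{-1})$ by either method.
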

\begin{proof} Recall that $V_{\oracle}(e_Z)$ is the conditional variance of $\sqrt{n}\bar{X}_{\co}^\top\hat\gamma_{\oracle}$ given $X_1,\dots,X_n$. 
Fix $e_Z \in \mathcal{C}_n$, and define $e_W(X) = p_{\at}(X) + p_{\co}(X) e_Z(X)$. Let $D = \diag(W_i - e_W(X_i))$. 
Then $W_i\simind\dbern(e_W(X_i))$ and we will show that
$$n\bigl|\bar{X}_{\co}^{\top} \bbE[(X^{\top}D^2X)^{-1}]\bar{X}_{\co} - \bar{X}_{\co}^{\top} \bbE[X^{\top}D^2X]^{-1}\bar{X}_{\co}\bigr| = O(n^{-1/2})$$
uniformly for any vector of nudge propensities $e_Z \in \mathcal{C}_n$. 

By the Cauchy-Schwarz inequality and the assumption that $\Vert X_i\Vert_2^2 \leq C$,  
\begin{multline}\label{eq:nowitsanorm}
n \bigl|\,\bar{X}_{\co}^{\top} \bbE[(X^{\top}D^2X)^{-1}]\bar{X}_{\co} - \bar{X}_{\co}^{\top} \bbE[X^{\top}D^2X]^{-1}\bar{X}_{\co}\bigr| \\
\leq Cn \big\Vert \bbE[(X^{\top}D^2X)^{-1}] - \bbE[X^{\top}D^2X]^{-1}\big\Vert_{\op},
\end{multline}
where $\Vert\cdot\Vert_{\op}$ denotes the operator norm of a matrix.
We will show that the right-hand side of~\eqref{eq:nowitsanorm} converges to zero. To prove this, we proceed in steps. First, write 
$$X^{\top}D^2X = \sum_{i = 1}^{n} (W_i - \hat{e}_W(X_i))^2 X_i X_i^{\top}.$$
Define the random matrices 
$$M_i = (W_i - e_W(X_i))^2X_iX_i^{\top} - e_W(X_i)(1 - e_W(X_i))X_i X_i^{\top},$$ 
which each have mean zero and are independent. Moreover, 
$$\Vert M_i^2\Vert_{\op} \leq \Vert(X_i X_i^{\top})^2\Vert_{\op} \leq C^2$$
by assumption that $\Vert X_i\Vert_2^2 \leq C$. By the matrix Hoeffding inequality (see \cite{Tropp-matrix}), for any $t>0$:
\begin{align} \label{eq:matrixhoeffding}
\bbP\bigl(\Vert X^{\top}D^2X - \bbE[X^{\top}D^2X]\Vert_{\op} \geq t \bigr) &\leq \bbP\left(\left\{\lambda_{\max}\left(\sum_{i = 1}^{n} M_i \right) \geq t\right\}\right) \notag\\
&\quad + \bbP\left(\left\{\lambda_{\min}\left(\sum_{i = 1}^{n} M_i \right) \leq -t\right\}\right) \notag\\ 
&\leq 2d e^{-t^2/8nC^2}.
\end{align} 
Now let $M = X^{\top}D^2X$. Then 
\begin{align*} 
\Vert n\bbE[M^{-1}] - n\bbE[M]^{-1}\Vert_{\op} 
&= n\bigl\Vert\bbE[M^{-1} - \bbE[M]^{-1}]\bigr\Vert_{\op} \\ 
&\leq n\bbE\bigl[\Vert M^{-1} - \bbE[M]^{-1}\Vert_{\op}\bigr] \\ 
&=n\bbE\bigl[\bigl\Vert M^{-1}\bigl(\bbE[M]-M\bigr)\bbE[M]^{-1} \bigr\Vert_{\op}\bigr]\\
&\leq n\bbE\bigl[\Vert M^{-1}\Vert_{\op} \times  \Vert M - \bbE[M]\Vert_{\op}\bigr]\times\bigl\Vert\bbE[M]^{-1}\bigr\Vert_{\op}. \end{align*} 
Now, we can use our assumptions (3) and (4) to see that, for $n$ sufficiently large,
\begin{align*} 
\bbE[M] &= \sum_{i = 1}^{n} e_W(X_i)(1 - e_W(X_i))X_i X_i^{\top} \\ 
&\succeq \sum_{i = 1}^{n} \eta(1 - \eta) X_i X_i^{\top} \\ 
&\succeq n \eta(1 - \eta) cI_d. \end{align*} 
Hence, $\Vert\bbE[M]^{-1}\Vert_{\op} = O(n^{-1})$. Similarly, for $n$ sufficiently large,
\begin{align*} 
M &= \sum_{i = 1}^{n} \bigl(W - e_W(X_i)\bigr)^2 X_i X_i^{\top} 
\succeq \sum_{i = 1}^{n} \eta^2 X_i X_i^{\top} 
\succeq n \eta^2 cI_d 
\end{align*} 
and so $\Vert M^{-1}\Vert_{\op} = O(n^{-1})$. As for the last term, we can use the matrix Hoeffding result~\eqref{eq:matrixhoeffding} to get 
\begin{align*} 
\bbE\bigl[\Vert M - \bbE[M]\Vert_{\op}\bigr] &= \int_0^{\infty} \bbP\bigl(\Vert M - \bbE[M]\Vert_{\op} \geq t\bigr) \rd t\\ 
&\leq 2d \int_0^{\infty} e^{-t^2/8nC^2} \rd t \\ 
&= d\sqrt{8\pi nC^2}. \end{align*} 
In summary, 
\begin{align*} |V_{\oracle}(e_Z) - \widetilde V_{\oracle}(e_Z)| &\leq nC\underbrace{\bigl\Vert\bbE[M]^{-1}\bigr\Vert_{\op}}_{O(n^{-1})} \bbE\bigl[\underbrace{\Vert M^{-1}\Vert_{\op}}_{O(n^{-1})} \Vert M - \bbE[M]\Vert_{\op}\bigr] \\ 
&=O(n^{-1}) \underbrace{\bbE\bigl[\Vert M - \bbE[M]\Vert_{\op}\bigr]}_{O(n^{1/2})} \\ 
&=O(n^{-1/2}). \end{align*} 
Thus, we have an upper bound of order $O(n^{-1/2})$ that does not depend on our choice of $e_Z \in \mathcal{C}_n$, so the limit is indeed zero uniformly. 
\end{proof}
\begin{lemma} \label{lemma:secondapprox}
Under the conditions of Theorem~\ref{thm:Vopt},
$$\underset{e_Z \in \mathcal{C}_n}{\sup} \text{ } |\widetilde V_{\oracle}(e_Z) - \widetilde{V}_{\plugin}(e_Z)| \overset{p}{\to} 0$$
as $n\to\infty$.
\end{lemma}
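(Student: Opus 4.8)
The plan is to mirror the structure of the proof of Lemma~\ref{lemma:firstapprox}, reducing the scalar difference to an operator-norm bound and then controlling that norm with the resolvent identity. Write $A = X^{\top}\diag(e_W(X)(1-e_W(X)))X$ and $\hat{A} = X^{\top}\diag(\hat{e}_W(X)(1-\hat{e}_W(X)))X$, so that $V_{\appr}(e_Z) = n\bar{X}_n^{\top}A^{-1}\bar{X}_n$ and $\wh{V}_{\appr}(e_Z) = n\bar{X}_n^{\top}\hat{A}^{-1}\bar{X}_n$. Since each $\Vert X_i\Vert_2^2 \le C$, convexity gives $\Vert\bar{X}_n\Vert_2^2 \le C$, so Cauchy--Schwarz yields
$$|V_{\appr}(e_Z) - \wh{V}_{\appr}(e_Z)| = n\bigl|\bar{X}_n^{\top}(A^{-1} - \hat{A}^{-1})\bar{X}_n\bigr| \le nC\,\Vert A^{-1} - \hat{A}^{-1}\Vert_{\op}.$$
It therefore suffices to show that $n\Vert A^{-1} - \hat{A}^{-1}\Vert_{\op}$ tends to zero in probability, uniformly over $e_Z \in \mathcal{C}_n$.

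Next I would apply the identity $A^{-1} - \hat{A}^{-1} = A^{-1}(\hat{A} - A)\hat{A}^{-1}$ and bound the three factors separately. For $\Vert A^{-1}\Vert_{\op}$, exactly as in Lemma~\ref{lemma:firstapprox}, Assumptions~(4) and~(5) give $A \succeq n\eta(1-\eta)cI_d$ once $n$ is large, whence $\Vert A^{-1}\Vert_{\op} = O(n^{-1})$ with constant depending only on $\eta$ and $c$. For the perturbation $\hat{A} - A = \sum_i[\hat{e}_W(X_i)(1-\hat{e}_W(X_i)) - e_W(X_i)(1-e_W(X_i))]X_iX_i^{\top}$, I would use that $t\mapsto t(1-t)$ is $1$-Lipschitz on $[0,1]$, so each bracketed coefficient is at most $|\hat{e}_W(X_i) - e_W(X_i)| \le \delta_n$ in absolute value, where $\delta_n := \sup_{e_Z\in\mathcal{C}_n}\Vert\hat{e}_W - e_W\Vert_{\infty}$. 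Together with $\Vert X_iX_i^{\top}\Vert_{\op} = \Vert X_i\Vert_2^2 \le C$ and the triangle inequality, this gives $\Vert\hat{A}-A\Vert_{\op} \le nC\delta_n$.

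The remaining factor $\Vert\hat{A}^{-1}\Vert_{\op}$ is where the argument needs care, and I expect it to be the main obstacle, because $\hat{e}_W$ is not assumed to lie in $[\eta,1-\eta]$. I would work on the high-probability event $\{\delta_n < \eta/2\}$, which has probability tending to one by Assumption~(2). On that event $\hat{e}_W(X_i) \in [\eta/2, 1-\eta/2]$ for every $i$ and every $e_Z \in \mathcal{C}_n$, so $\hat{e}_W(X_i)(1-\hat{e}_W(X_i)) \ge (\eta/2)(1-\eta/2) =: \eta'$, which forces $\hat{A} \succeq n\eta'cI_d$ and hence $\Vert\hat{A}^{-1}\Vert_{\op} = O_p(n^{-1})$. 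Combining the three bounds gives $n\Vert A^{-1}-\hat{A}^{-1}\Vert_{\op} \le n\cdot O(n^{-1})\cdot nC\delta_n\cdot O_p(n^{-1}) = O_p(1)\,\delta_n$, and so the supremum in the statement is $O_p(1)\,\delta_n = o_p(1)$.

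Finally, I would verify that the uniformity over $e_Z$ is genuine rather than assumed: every constant factor above is free of $e_Z$, and $\delta_n = o_p(1)$ holds because $\hat{e}_W(X_i) - e_W(X_i) = (\hat{p}_{\at}(X_i) - p_{\at}(X_i)) + (\hat{p}_{\co}(X_i) - p_{\co}(X_i))e_Z(X_i)$ with $e_Z(X_i)\in[0,1]$, so $\delta_n \le \Vert\hat{p}_{\at}-p_{\at}\Vert_{\infty} + \Vert\hat{p}_{\co}-p_{\co}\Vert_{\infty}$, a bound independent of $e_Z$ that Assumption~(2), via the remark following Theorem~\ref{thm:Vopt}, controls to be $o_p(1)$.
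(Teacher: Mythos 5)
Your proof is correct and follows essentially the same route as the paper's: the same Cauchy--Schwarz reduction to controlling $n\Vert A^{-1}-\hat A^{-1}\Vert_{\op}$, the same resolvent identity $A^{-1}-\hat A^{-1}=A^{-1}(\hat A-A)\hat A^{-1}$, and the same three factor bounds, namely $\Vert A^{-1}\Vert_{\op}=O(n^{-1})$, $\Vert\hat A^{-1}\Vert_{\op}=O_p(n^{-1})$ on a high-probability event where $\hat e_W$ stays in a shrunken interval, and $\Vert\hat A-A\Vert_{\op}=o_p(n)$. The only differences are cosmetic: you obtain the perturbation bound from the $1$-Lipschitz property of $t\mapsto t(1-t)$ plus the triangle inequality where the paper uses Cauchy--Schwarz on the quadratic form, and your explicit reduction of the uniformity over $e_Z\in\mathcal{C}_n$ to $\Vert\hat p_{\at}-p_{\at}\Vert_\infty+\Vert\hat p_{\co}-p_{\co}\Vert_\infty$ makes precise a point the paper leaves implicit.
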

\begin{proof}
Let $E = \diag(e_W(X)(1 - e_W(X))$ and $\hat{E} = \diag(\hat{e}_W(X)(1 - \hat{e}_W(X))$. As in Lemma \ref{lemma:firstapprox}, it suffices to show that 
$$n\Vert(X^{\top}EX)^{-1} - (X^{\top}\hat{E}X)^{-1}\Vert_{\op} \overset{p}{\to} 0,$$
Write $Q = X^{\top}EX$ and $\hat{Q} = X^{\top}\hat{E}X$, in which case 
\begin{align*} 
n\Vert Q^{-1} - \hat{Q}^{-1}\Vert_{\op} &
= n\Vert Q^{-1}(\hat{Q} - Q)\hat{Q}^{-1}\Vert_{\op} \\ 
&\leq n\Vert Q^{-1}\Vert_{\op} \Vert\hat{Q}^{-1}\Vert_{\op} \Vert Q - \hat{Q}\Vert_{\op}.
\end{align*} 
The first operator norm term is $O(n^{-1})$ as shown in Lemma \ref{lemma:firstapprox}. For the second, recall the assumption that $\Vert\hat{e}_W(X) - e_W(X)\Vert_{\infty}$ is $o_p(1)$ as $n\to\infty$. With probability tending to one, therefore, $\hat{e}_W(X)(1 - \hat{e}_W(X))$ is within $\eta(1 - \eta)/2$ of the true $e_W(X)(1 - e_W(X))$ for all $X$. Hence, with probability tending to one, 
\begin{align*} 
\hat{Q} &= \sum_{i = 1}^{n} \hat{e}_W(X_i)(1 - \hat{e}_W(X_i)) X_i X_i^{\top} \\ 
&\succeq \sum_{i = 1}^{n} \frac{1}{2} \eta(1 - \eta) X_i X_i^{\top} \\ 
&\succeq \frac{n}{2} \eta(1 - \eta) c I_d, \end{align*} 
meaning that $\Vert\hat{Q}^{-1}\Vert_{\op} = O_p(n^{-1})$. 

Finally, to address the last term, we have
\begin{align*} 
\Vert Q - \hat{Q}\Vert_{\op} &= \Vert X^{\top}(E - \hat{E})X\Vert_{\op} \\ 
&= \underset{\Vert v\Vert_2 = 1}{\sup} \sum_{i = 1}^{n} (X_i^{\top} v)^2 (E_{ii} - \hat{E}_{ii})\\ 
&\leq \underset{\Vert v\Vert_2 = 1}{\sup} \sqrt{\sum_{i = 1}^{n} (X_i^{\top}v)^4} \sqrt{\sum_{i = 1}^{n} (E_{ii} - \hat{E}_{ii})^2}.
\end{align*}
Now $\sqrt{\sum_{i=1}^n (X_i^\top v)^4}=O(n^{1/2})$ because $\Vert X_i\Vert_2\le C$, 
and
\begin{align*}
 \sum_{i = 1}^{n} (E_{ii} - \hat{E}_{ii})^2
&=\sum_{i = 1}^{n} \bigl(e_W(X_i) - \hat{e}_W(X_i)\bigr)^2\bigl(1 + e_W(X_i) + \hat{e}_W(X_i)\bigr)^2\\ 
&\leq 9\sum_{i = 1}^{n} (e_W(X_i) - \hat{e}_W(X_i))^2\\
&=o_p(n)
\end{align*} 
using $1 + e_W(X_i) + \hat{e}_W(X_i) \leq 3$ and 
$\Vert\hat{e}_W - e_W\Vert_{\ell^{\infty}} = o_p(1)$. As a result, $\Vert Q-\hat Q\Vert_{\op}=o_p(n)$, and so
$$n\Vert Q^{-1} - \hat{Q}^{-1}\Vert_{\op} \leq n\underbrace{\Vert Q^{-1}\Vert_{\op}}_{O(n^{-1})} \underbrace{\Vert\hat{Q}^{-1}\Vert_{\op}}_{O_p(n^{-1})} \underbrace{\Vert Q - \hat{Q}\Vert_{\op}}_{o_p(n)} \overset{p}{\to} 0.$$
\end{proof}

To conclude the proof of Theorem~\ref{thm:Vopt}, observe that $V_\oracle(\hat{e}_Z^*) \geq V_\oracle(e_Z^*)$ by definition, so it suffices to show that the difference is asymptotically nonpositive. We have 
\begin{align*} V_\oracle(\hat{e}_Z^*) - V_\oracle(e_Z^*) &= V_\oracle(\hat{e}_Z^*) - \widetilde V_{\plugin}(\hat{e}_Z^*) \\
&\quad + \widetilde V_{\plugin}(\hat{e}_Z^*) - \widetilde V_{\plugin}(e_Z^*)\\
&\quad + \widetilde V_{\plugin}(e_Z^*) - V_\oracle(e_Z^*). \end{align*}
Lemmas \ref{lemma:firstapprox} and \ref{lemma:secondapprox} imply that $\underset{e_Z \in \mathcal{C}_n}{\sup} \text{ } |V_{\oracle}(e_Z) - \widetilde{V}_{\plugin}(e_Z)| \overset{p}{\to} 0$ by the triangle inequality, so the first and third terms on the right-hand side converge in probability to zero. The middle term, meanwhile, is always nonpositive by definition of $\hat{e}_Z^*$ as the minimizer of $\widetilde V_{\plugin}$. Hence, we have the desired result. \hfill$\blacksquare$

\subsection{Proof of Theorem \ref{thm:gamma-consistency}}\label{ssec:proofthmgammaconsistency}
To show that $||\hat{\gamma}_{\oracle} - \hat{\gamma}_{\plugin}||_2 \overset{p}{\to} 0$, write 
\begin{align*} \hat{\gamma}_{\oracle} &= (X^{\top}D^2X)^{-1} X^{\top}D(Y - m^*),\quad\text{and} \\
\quad \hat{\gamma}_{\plugin} &= (X^{\top}\hat{D}^2X)^{-1}X^{\top}\hat{D}(Y - \hat{m}^*).
\end{align*}
Here, $D = \diag(W_i - e_W(X_i))$ and $\hat{D} = \diag(W_i - \hat{e}_W(X_i))$, while $m^* = m^*(X, Z, W)$ and $\hat{m}^* = \hat{m}^*(X, Z, W)$ are $n$-dimensional vectors with entries as defined in \eqref{eq: m*} and \eqref{eq:m*hat}, respectively. Define $M = X^{\top}D^2X$ and $\hat{M} = X^{\top}\hat{D}^2X$. Then 
\begin{align*} 
\Vert\hat{\gamma}_{\oracle} - \hat{\gamma}_{\plugin}\Vert_2 &= \Vert M^{-1}X^{\top}D(Y - m^*) - \hat{M}^{-1}X^{\top}\hat{D}(Y - \hat{m}^*)\Vert_2 \\ 
&\leq \Vert M^{-1}X^{\top}D(Y - m^*) - M^{-1}X^{\top}D(Y - \hat{m}^*)\Vert_2 \\ 
&\phe + \Vert M^{-1}X^{\top}D(Y - \hat{m}^*) - M^{-1}X^{\top}\hat{D}(Y - \hat{m}^*)\Vert_2 \\
&\phe+ \Vert M^{-1}X^{\top}\hat{D}(Y - \hat{m}^*) - \hat{M}^{-1}X^{\top}\hat{D}(Y - \hat{m}^*)\Vert_2 \\ 
&:= T_1 + T_2 + T_3.
\end{align*} 
It suffices to show that each of these three terms converges in probability to zero. 

For term $T_1$, we have
\begin{align*}T_1&= \Vert M^{-1} X^{\top}D(\hat{m}^* - m^*)\Vert_2 \leq \underbrace{\Vert M^{-1}\Vert_{\op}}_{O(n^{-1})} \underbrace{\Vert X^{\top}D\Vert_{\op}}_{O(n^{1/2})} \underbrace{\Vert\hat{m}^* - m^*\Vert_2}_{o_p(n^{1/2})}\\
&=o_p(1).
\end{align*}
Here, we use results derived in the proof of Theorem \ref{thm:Vopt} to bound the first two terms and use the assumption that $\Vert\hat{m}^* - m^*\Vert_2 = o_p(n^{1/2})$ to bound the last term.  

For term $T_2$, we have 
\begin{align*}T_2 &= \Vert M^{-1} X^{\top}(D - \hat{D})(Y - \hat{m}^*)\Vert_2 \\
&\leq \Vert M^{-1}\Vert_{\op} \Vert X^{\top}\Vert_{\op} \Vert D - \hat{D}\Vert_{\op} \Vert Y - \hat{m}^*\Vert_2.
\end{align*}
Define $\tilde{\varepsilon}(X, Z, W) = \varepsilon - \mathbb{E}[\varepsilon \st X, Z, W]$. We have 
\begin{align*} 
\Vert Y - \hat{m}^*\Vert_2 &\leq \Vert Y - m^*\Vert_2 + \Vert m^* - \hat{m}^*\Vert_2 \\ 
&= \Vert(W - e_W(X))X\gamma + \tilde{\varepsilon}\Vert_2 + \Vert m^* - \hat{m}^*\Vert_2 \\
&\leq \underbrace{\Vert X\gamma\Vert_2}_{O(n^{1/2})} + \underbrace{\Vert\tilde{\varepsilon}\Vert_2}_{O_p(n^{1/2})} + \underbrace{\Vert m^* - \hat{m}^*\Vert_2}_{o_p(n^{1/2})}.  
\end{align*} 
These bounds follow respectively from the assumption that $\Vert X_i\Vert_2^2 \leq C$, the assumption that the $\tilde{\varepsilon}$ terms are independent and homoscedastic, and the assumption that $\Vert m^* - \hat{m}^*\Vert_2 = o_p(n^{1/2})$. This sum is therefore $O_p(n^{1/2})$. Then 
$$T_2 \leq \underbrace{\Vert M^{-1}\Vert_{\op}}_{O(n^{-1})} \underbrace{\Vert X^{\top}\Vert_{\op}}_{O(n^{1/2})} \underbrace{\Vert D - \hat{D}\Vert_{\op}}_{o_p(1)} \underbrace{\Vert Y - \hat{m}^*\Vert_2}_{O_p(n^{1/2})} = o_p(1).$$

Finally, 
\begin{align*}T_3 &= \Vert(M^{-1} - \hat{M}^{-1})X^{\top}\hat{D}(Y - \hat{m}^*)\Vert\\
&\leq \Vert M^{-1} - \hat{M}^{-1}\Vert_{\op} \Vert X^{\top}\hat{D}\Vert_{\op} \Vert Y - \hat{m}^*\Vert_2.
\end{align*}
We have shown that the latter two terms are $O(n^{1/2})$ and $O_p(n^{1/2})$ already, so it suffices to show that the first term is $o_p(n^{-1})$. Indeed, 
\begin{align*}\Vert M^{-1} - \hat{M}^{-1}\Vert_{\op} &\leq \Vert M^{-1}\Vert_{\op} \Vert\hat{M}^{-1}\Vert_{\op} \Vert M - \hat{M}^{-1}\Vert_{\op} \\ 
&= \Vert M^{-1}\Vert_{\op} \Vert\hat{M}^{-1}\Vert_{\op} \Vert X^{\top}(D - \hat{D})X\Vert_{\op} \\ 
&\leq \underbrace{\Vert M^{-1}\Vert_{\op}}_{O(n^{-1})} \underbrace{\Vert\hat{M}^{-1}\Vert_{\op}}_{O_p(n^{-1})} \underbrace{\Vert X\Vert_{\op}^2}_{O(n)} \underbrace{\Vert D - \hat{D}\Vert_{\op}}_{o_p(1)}. 
\end{align*} 
This product is thus $o_p(n^{-1})$. Then $T_3$ is $o_p(1)$, which completes the proof. 
\hfill$\blacksquare$

\subsection{Proof of Theorem \ref{thm:constraintresults}}\label{ssec:proofconstraintresults}
For part (a) of the theorem, write
$$\widetilde{V}_{\plugin}(e_Z) = \bar{X}_{\co}^{\top} \bigl(X^{\top}\diag(\hat{e}_W(X)(1 - \hat{e}_W(X)))X\bigr)^{-1}\bar{X}_{\co},$$
where $\hat{e}_W(X) = \hat{p}_{\at}(X) + \hat{p}_{\co}(X) e_Z(X)$. For candidate nudge propensity vectors $\hat e_{Z, 1}$ and $\hat e_{Z, 2}$, let $\hat e_{W, 1}$ and $\hat e_{W, 2}$ be the corresponding treatment propensity vectors. Then 
$$\hat{e}_{W, 1}(1 - \hat{e}_{W, 1}) \succeq \hat{e}_{W, 2}(1 - \hat{e}_{W, 2}) \implies \widetilde{V}_{\plugin}(\hat e_{Z, 1}) \leq \widetilde{V}_{\plugin}(\hat e_{Z, 2}).$$
Since $x(1 - x)$ is maximized at $x = 1/2$, the optimal solution if possible is to take 
\begin{equation} \label{eq:eW_opt_RCT} \hat{e}_W(X_i) = \frac{1}{2} \iff e_Z(X_i) = \frac{1 - 2\hat{p}_{\at}(X_i)}{2\hat{p}_{\co}(X_i)}. \end{equation} 
If this is not in $[0, 1]$, then the optimal solution is to take either $e_Z(X_i) = 0$ or $e_Z(X_i) = 1$, depending on which yields $\hat{e}_W(X_i)$ closer to $1/2$. From \eqref{eq:eW_opt_RCT}, we see that this will be $0$ if $\hat{p}_{\at} > 1/2$ and $1$ if $\hat{p}_{\at} + \hat{p}_{\co} < 1/2$. In summary, 
$$\hat{e}_Z^*(X_i) = \begin{cases}\frac{1 - 2\hat{p}_{\at}(X_i)}{2\hat{p}_{\co}(X_i)}, & p_{\at} \leq \frac{1}{2} \text{ and } p_{\at} + p_{\co} \geq \frac{1}{2} \\ 
0, & p_{\at} > \frac{1}{2} \\ 
1, & p_{\at} + p_{\co} < \frac{1}{2}.
\end{cases} $$

For part (b), consider the case where we have a budget constraint, i.e., $\mathcal{C}_n = \{e_Z \in [0, 1]^n \st \bar{\hat{e}}_W = \mu\}$. Taking $e = e_Z$, the Lagrangian is 
\begin{align*}
L(e; \lambda, \tilde{\lambda}, \nu) &= \bar{X}_{\co}^{\top} \bigl(X^{\top}\diag(\hat{e}_W(X)(1 - \hat{e}_W(X)))X\bigr)^{-1} \bar{X}_{\co} \\ 
&+ \sum_{i = 1}^{n} \lambda_i(e_i - 1) - \tilde\lambda_ie_i + \nu\bigl((\hat{p}_{\at}(X_i) + \hat{p}_{\co}(X_i)e_i)^{\top}1 - \mu\bigr),
\end{align*}
where $\lambda_i$ and $\tilde\lambda_i$ are Lagrange multipliers for the constraints $e_i\le1$ and $e_i\ge0$, while $\nu$ is the Lagrange multiplier for the budget constraint.
Let $\hat{D}(e) = \diag(\hat{e}_W(X)(1 - \hat{e}_W(X))$ and $\hat{M}(e) = X^{\top}\hat{D}(e)X$, in which case
\begin{align*} \frac{\partial}{\partial e_i} \bar{X}_{\co}^{\top} \hat{M}(e)^{-1} \bar{X}_{\co} &= \bar{X}_{\co}^{\top} \left(\frac{\partial}{\partial e_i} \hat{M}(e)^{-1}\right) \bar{X}_{\co} \\ 
&= -\bar{X}_{\co}^{\top}\hat{M}(e)^{-1} \left(\frac{\partial}{\partial e_i} \hat{M}(e)\right) \hat{M}(e)^{-1} \bar{X}_{\co} \\ 
&=-\bar{X}_{\co}^{\top}\hat{M}(e)^{-1} X^{\top} \hat{D}_i(e) X \hat{M}(e)^{-1} \bar{X}_{\co},
\end{align*} 
where
$$\hat{D}_i(e) = \frac{\partial}{\partial e_i} \hat{D}(e) = \begin{bmatrix} 0 \\ & \ddots \\ & & \hat{p}_{\co}(X_i)(1 - 2 \hat{e}_W(X_i)) \\ & & & \ddots \\ & & & & 0\end{bmatrix}.$$
The stationarity condition for the Lagrangian is then 
$$\frac{\partial L}{\partial e_i} = -\bar{X}_{\co}^{\top} \hat{M}(e)^{-1}X^{\top} \hat{D}_i(e)X \hat{M}(e)^{-1} \bar{X}_{\co} + \lambda_i - \tilde{\lambda}_i - \nu = 
0 \text{ for all } 1 \leq i \leq n.$$
To prove that $\hat{e}_Z^*$ should be constant, first note that this would imply that $\lambda_i = \tilde\lambda_i = 0$ for all $i$ by complementary slackness. This would then reduce to 
$$\bar{X}_{\co}^{\top}\hat{M}(e)^{-1} X^{\top} \hat{D}_i(e) X \hat{M}(e)^{-1} \bar{X}_{\co} = \nu \text{ for all } 1 \leq i \leq n.$$
Thus, it suffices to show that, for $\hat{e}_Z$ chosen such that $\hat{e}_W = \mu 1_n$ is a constant vector, the gradient of the objective is the same in each entry. At this choice, $D(\hat{e}_Z) = \mu(1 - \mu)I_n$ is a multiple of the identity. Moreover, note that 
$$\bar{X}_{\co} = \frac{1}{n} \sum_{i = 1}^{n} X_i \hat{p}_{\co}(X_i) \propto X^{\top} 1_n,$$
where here we use that $\hat{p}_{\co}(X_i)$ is constant in $i$ by assumption. Then 
\begin{align*} 
\bar{X}_{\co}^{\top}\hat{M}(e)^{-1} X^{\top} \hat{D}_i(e) X \hat{M}(e)^{-1} \bar{X}_{\co} &\propto 1_n^{\top} X(X^{\top}X)^{-1}X^{\top} \hat{D}_i(e) X(X^{\top}X)^{-1}X^{\top}1_n \\ 
&= 1_n^{\top}H \hat{D}_i(e) H 1_n \\ 
&= 1_n^{\top} \hat{D}_i(e) 1_n \\ 
&= \hat{p}_{\co}(X_i)(1 - 2\mu). 
\end{align*} 
Here, $H = X(X^{\top}X)^{-1}X^{\top}$ is the standard hat matrix, and we use that $H1_n = 1_n$ since $1_n \in \mathrm{span}(X)$ by assumption. Moreover, since the compliance probability $\hat{p}_{\co}$ is constant by assumption, this quantity is constant in $i$, and so the KKT conditions are satisfied. Consequently, $\hat{e}_Z^* = (\mu - \bar{\hat{p}}_{\at})/{\hat{p}_{\co}}$ is a solution to the constrained optimization.

\section{Cross-fitting for estimation}\label{sec:cross-fitting}

In this section, we outline a cross-fitting procedure to estimate $\gamma$ at a $\sqrt{n}$ rate. This procedure will require stronger assumptions than those of Theorem \ref{thm:gamma-consistency} and is more complicated, so we omitted it from the main text for simplicity. The method resembles that of \cite{robinson}, and the proof is essentially that of Theorem 4.2 of \cite{wagerbook}. 

Suppose that we have run the entire design procedure and have data $\{X_i, Z_i, W_i, Y_i\}_{i \leq n}$ from the main study (we do not use the pilot study in what follows because it has been used to pick $e_Z$, which would introduce dependencies). To produce the estimate $\hat{\gamma}_{\plugin}$ described in Section \ref{ssec:LATEest}, we need to form estimates $\hat{e}_W(X)$ and $\hat{m}^*(X, Z, W)$ and then run the regression $Y - \hat{m}^*(X, Z, W) \sim (W - \hat{e}_W(X))X$. 

However, dependencies can arise if one uses the same data to estimate nuisance functions and to fit the regression. Thus, a standard approach in semiparametric causal inference is to use cross-fitting, in which the data are divided into $K > 1$ folds to ensure that fitting and estimation are independent. Specifically, partition the data set into subsets of (approximately) equal size $\mathcal{D}_1, \mathcal{D}_2, \ldots, \mathcal{D}_K$ and, for fold $k$, estimate $\hat{e}_W(X)^{(k)}$ and $\hat{m}^*(X, Z, W)^{(k)}$ within each dataset. Define
\begin{align*} 
\hat{e}_W(X)^{(-k)} &= \frac{1}{K - 1} \sum_{i \neq k} \hat{e}_W(X)^{(i)} \\ 
\hat{m}^*(X, Z, W)^{(-k)} &= \frac{1}{K - 1} \sum_{i \neq k} \hat{m}^*(X, Z, W)^{(i)}
\end{align*}
to be the average of the estimates on all other folds besides fold $k$. To obtain $\hat{\gamma}_{\plugin}^{(k)}$, regress $Y - \hat{m}^*(X, Z, W)^{(-k)}$ on $(W - \hat{e}_W(X)^{(-k)})X$. Finally, define
$$\hat{\gamma}_{\plugin} = \frac{1}{K} \sum_{k = 1}^{K} \hat{\gamma}_{\plugin}^{(k)}.$$
For data point $i$, we let $k(i)$ denote the fold in which it is placed. The following theorem establishes that this procedure estimates $\gamma$ at a $\sqrt{n}$ rate under suitable convergence assumptions on estimation of $\hat{m}^*(X, Z, W)^{(k)}$ and $\hat{e}_W(X)^{(k)}$ in each fold. We emphasize that these rates are stronger than those in Theorem \ref{thm:gamma-consistency}, which only proves consistency of $\hat{\gamma}_{\plugin}$.  
\begin{theorem}\label{thm:cross-fitting}
Suppose that the following conditions all hold: 
\begin{enumerate} 
\item The propensity scores are estimated sufficiently well:
$$\Vert\hat{e}_W(X)^{(-k)} - e_W(X)\Vert^2_{2} = \sum_{i: \text{ } k(i) = k} (\hat{e}_W(X_i)^{-(k)} - e_W(X_i))^2 = o_p(n^{3/2})$$.
\item The $m^*(X, Z, W)$ functionals are estimated sufficiently well: 
$$\Vert\hat{m}^*(X, Z, W)^{(-k)} - m^*(X, Z, W)\Vert^2_{2} = \sum_{i: \text{ } k(i) = k} (\hat{m}^*(X_i, Z_i, W_i)^{(-k)} - m^*(X_i, Z_i, W_i))^2 = o_p(n^{3/2})$$.
\item The rows of $X$ are bounded, i.e., $||X_i||_{\infty} \leq M$ for some $M < \infty$.
\item There exists $\eta > 0$ such that $e_W(X) = p_{\at}(X) + p_{\co}(X) e_Z(X) \in [\eta, 1 - \eta]$ for all $X \in \mathcal{X}$. 
\end{enumerate} 
Write $\tilde{Y}_i = Y_i - m^*(X_i, Z_i, W_i)$ and $\tilde{X}_i = (W_i - e_W(X_i))X_i$, so that
$$\tilde{Y}_i = \tilde{X}_i^{\top} \gamma + \tilde{\varepsilon}_i$$
and $\bbE[\tilde{\varepsilon}_i \st X_i, Z_i, W_i] = 0.$ Then if $\Var(\tilde{X}_i)$ is full rank,
$$\sqrt{n}(\hat{\gamma}_{\plugin} - \gamma) \overset{d}{\to} N(0, V_{\gamma}),$$
where
$$V_{\gamma} = \Var(\tilde{X}_i)^{-1} \bbE[\tilde{\varepsilon}_i^2 \tilde{X}_i\tilde{X}_i^{\top}]\Var(\tilde{X}_i)^{-1}.$$
\end{theorem}
\begin{proof}
See Theorem 4.2 of \cite{wagerbook}. 
\end{proof}

Theorem \ref{thm:cross-fitting} gives the same CLT for $\hat{\gamma}_{\plugin}$ that we would have if we knew the true nuisance functions, and thus could use $\hat{\gamma}_{\oracle}$ directly.
The proof consists of using our assumptions to show that $\hat{\gamma}_{\plugin} - \hat{\gamma}_{\oracle} = o_p(n^{-1/2})$. Under the conditions of \ref{thm:cross-fitting}, $\hat{\gamma}_{\plugin}$ is also asymptotically linear. Hence, we could bootstrap the $\hat{\gamma}_{\plugin}$ estimator obtained via cross-fitting to obtain an asymptotically valid confidence interval for $\gamma$.

\section{Generalization to weighted least squares} \label{sec:WLS}
Suppose first that we allow $\Var(\varepsilon \st X, Z, W) = \sigma^2(X)$ to vary with $X$ (we explain subsequently the case of dependence on both $X$ and $W$). Recall from the formula \eqref{OLSeqn} that 
$$Y - m^*(X, Z, W) = (W - e_W(X))X\gamma + \varepsilon - \bbE[\varepsilon \st X, Z, W].$$
Under heteroscedasticity, we can additionally use the pilot study to estimate $\sigma^2(X)$. This can be done simply by noting that 
$$Y - m(X, Z, W) = \varepsilon - \bbE[\varepsilon \st X, Z, W],$$
where $m(X, Z, W) = \bbE[Y \st X, Z, W]$, and so we need only estimate the variance of $Y - m(X, Z, W)$ as a function of $X$. Letting $\hat{\sigma}^2(X)$ be such a function and $\hat{\Sigma}$ be the diagonal matrix whose entries are $\hat{\sigma}^2(X_i)$, this immediately suggests the weighted least squares estimator:
$$\hat{\gamma}_{\mathrm{WLS}} = (X^{\top} \hat{D} \hat{\Sigma}^{-1}\hat{D} X)^{-1} X^{\top}\hat{D} \hat{\Sigma}^{-1} Y.$$
Likewise, the corresponding objective function to minimize is simply
$$\widetilde{V}_{\plugin}(\hat e_Z) = \bar{X}_{\co}^{\top}\left(X^{\top} \text{diag}\left(\frac{\hat{e}_W(X)(1 - \hat{e}_W(X))}{\hat{\sigma}^2(X)} \right) X\right)^{-1}\bar{X}_{\co}.$$
If we allow $\sigma^2(X, W)$ to vary with both $X$ and $W$, a non-convex objective can result in some cases.  However the following theorem provides a condition under which convexity is preserved. 
\begin{theorem}\label{thm:hetero_ratio}
If $\hat{\sigma}^2(X, W)$ varies with both $X$ and $W$ and the condition 
$$\frac{1}{2} \leq \frac{\hat{\sigma}^2(X_i, 1)}{\hat{\sigma}^2(X_i, 0)} \leq 2$$
holds for all $X_i$ with $1 \leq i \leq n$, then the weighted least squares procedure yields a convex objective function $\widetilde{V}_{\plugin}(e_Z)$. 
\end{theorem}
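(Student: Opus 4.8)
The plan is to reduce the claim to a single scalar concavity condition, after which convexity of $\hat V_{\appr}$ follows from a standard composition rule. The first step is to identify the per-observation weight that the weighted least squares objective produces when $\hat\sigma^2$ depends on both $X$ and $W$. Fixing an index $i$ and writing $p=\hat e_W(X_i)$, $\sigma_1^2=\hat\sigma^2(X_i,1)$ and $\sigma_0^2=\hat\sigma^2(X_i,0)$, I would take the expectation over $W_i\sim\dbern(p)$ of the scalar $(W_i-\hat e_W(X_i))^2/\hat\sigma^2(X_i,W_i)$ that multiplies $X_iX_i^\top$ in $X^\top\hat D\hat\Sigma^{-1}\hat DX$. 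This yields the weight
$$w(p)=\frac{p(1-p)^2}{\sigma_1^2}+\frac{(1-p)p^2}{\sigma_0^2},$$
so that $\hat V_{\appr}(e_Z)=\bar X_n^\top\bigl(X^\top\diag(w(\hat e_W(X_i)))X\bigr)^{-1}\bar X_n$. When $\sigma_1^2=\sigma_0^2$ this collapses to $p(1-p)/\sigma^2$, recovering the earlier objective and serving as a sanity check.

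Next I would record two abstract facts about the outer map $\phi(w)=\bar X_n^\top(X^\top\diag(w)X)^{-1}\bar X_n$, viewed on the set of weight vectors making $X^\top\diag(w)X$ positive definite. Since $w\mapsto X^\top\diag(w)X$ is affine and $A\mapsto\bar X_n^\top A^{-1}\bar X_n$ is convex on positive definite matrices, $\phi$ is convex in $w$; and since increasing any $w_i$ adds the positive semidefinite term $(\delta w_i)X_iX_i^\top$ to $X^\top\diag(w)X$, the inverse decreases in the Loewner order, so $\phi$ is nonincreasing in each coordinate $w_i$.

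The key step is the vector composition rule: a function that is convex and nonincreasing in each argument, composed with concave inner functions, is convex. Because $\hat e_W(X_i)=\hat p_{\at}(X_i)+\hat p_{\co}(X_i)e_Z(X_i)$ is affine in $e_Z(X_i)$, and each weight depends on $e_Z$ only through the single coordinate $e_Z(X_i)$, it suffices to show that $w(p)$ is concave in $p\in[0,1]$; concavity in that one coordinate gives joint concavity of $w_i$ in $e_Z$, and the composition rule then delivers convexity of $\hat V_{\appr}$ in $e_Z$. Thus everything reduces to concavity of the scalar $w$.

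Finally I would verify concavity by direct differentiation. Expanding $w$ as a cubic in $p$ and differentiating twice yields a $w''(p)$ that is affine in $p$, so $w''\le 0$ on $[0,1]$ holds if and only if it holds at the two endpoints. I expect $w''(0)\le 0$ to be equivalent to $\sigma_1^2/\sigma_0^2\le 2$ and $w''(1)\le 0$ to be equivalent to $\sigma_1^2/\sigma_0^2\ge 1/2$, so the two-sided bound $\tfrac12\le\hat\sigma^2(X_i,1)/\hat\sigma^2(X_i,0)\le 2$ is precisely the condition that makes $w$ concave for every $i$, completing the argument. The genuinely delicate parts are getting the weight $w(p)$ right through the expectation over the random $W$ and the sign bookkeeping in the composition rule; the endpoint computation, where the constant $2$ emerges, is the crux of the statement but is mechanical once $w''$ is seen to be affine.
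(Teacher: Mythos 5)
Your proposal is correct and follows essentially the same route as the paper: identify the per-observation weight $g_i(p)=p(1-p)^2/\hat\sigma^2(X_i,1)+(1-p)p^2/\hat\sigma^2(X_i,0)$ from the expectation over $W_i$, reduce convexity of $\hat V_{\appr}$ to concavity of each $g_i$ via the vector composition rule (using that $\hat e_W$ is affine in $e_Z$), and check the affine second derivative at the endpoints to extract the interval $[1/2,2]$. If anything, yours is slightly more careful than the paper's: you verify explicitly that the outer map is both convex and coordinatewise nonincreasing (the paper only cites the composition rule), and your endpoint-to-inequality assignment ($p=0$ giving the upper bound $2$, $p=1$ giving the lower bound $1/2$) is the correct one, whereas the paper's displayed second derivative has the denominators $\sigma^2(X_i,0)$ and $\sigma^2(X_i,1)$ interchanged --- a typo that leaves the symmetric final conclusion unchanged.
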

\begin{proof} 
We focus our attention to the objective $\widetilde V_{\oracle}(e_Z)$ rather than $\widetilde{V}_{\plugin}(e_Z)$, since the latter merely substitutes in estimated functions for oracle ones. The final conclusion will then hold with $\sigma^2(X, W)$ replaced by $\hat{\sigma}^2(X, W)$. In addition, $\widetilde V_{\oracle}$ is convex in $e_Z(X)$ if and only if it is convex in $e_W(X)$ since the latter is a linear function of the former with positive coefficients, and so we can focus on convexity in $e_W(X)$. 

To see the possible non-convexity, recall from the derivation in Section \ref{ssec:design} that we must move a conditional expectation inside of a matrix inverse when switching from $V_{\oracle}(e_Z)$ to $\widetilde V_{\oracle}(e_Z)$, and that this expectation passes to the innermost diagonal matrix. In the fully heteroscedastic setting, the $i$'th term in this final expectation is
\begin{equation} \label{eq:hetero}\begin{split}  g_i(e_W(X_i)) &:= \bbE\left[\frac{(W_i - e_W(X_i))^2}{\sigma^2(X_i, W_i)} \st X_i\right] \\
&= e_W(X_i) \frac{(1 - e_W(X_i))^2}{\sigma^2(X_i, 1)} + (1 - e_W(X_i)) \frac{e_W(X_i)^2}{\sigma^2(X_i, 0)}. 
\end{split} 
\end{equation}
The variance  $\widetilde V_{\oracle}$ is then
\begin{align*} \widetilde V_{\oracle} &= \bar{X}_{\co}^{\top} (X^{\top} \text{diag}(g(e_W(X)) X)^{-1} \bar{X}_{\co} \\ 
&= \bar{X}_{\co}^{\top} \left( \sum_{k = 1}^{n} g_i(e_W(X_i)) X_i X_i^{\top}\right)^{-1} \bar{X}_{\co}. \end{align*}
By the vector composition rules for convex functions (e.g., Chapter 3 of \cite{boyd2004convex}), $\widetilde V_{\oracle}$ is convex in $e_W(X)$ if each $g_i$ function as defined in \eqref{eq:hetero} is concave. This cannot be guaranteed for general $\sigma^2(X_i, 1)$ and $\sigma^2(X_i, 0)$. However, there is still a wide range of allowable values of $\{\sigma^2(X, 1), \sigma^2(X, 0)\}$. Taking the second derivative of $g_i(e_W(X_i))$ with respect to $e_W(X_i)$, we arrive at
\begin{equation} \label{eq:hetero-second-deriv} \frac{1}{\sigma^2(X_i, 0)}\left(6 e_W(X_i) - 4\right) + \frac{1}{\sigma^2(X_i, 1)}\left(2 - 6e_W(X_i)\right).\end{equation}
We seek conditions on the variance ratio under which this is guaranteed to be non-positive for any $e_W(X_i) \in [0, 1]$. This is a linear function in $e_W(X_i)$, so it suffices to check non-positivity at the endpoints $e_Z = 0$ and $e_Z = 1$. Even more extreme, we can simply check non-positivity at the endpoints $e_W = 0$ and $e_W = 1$: 
\begin{align*} 
&e_W(X_i) = 0: \frac{- 4}{\sigma^2(X_i, 0)} + \frac{2}{\sigma^2(X_i, 1)} \leq 0 \iff \frac{\sigma^2(X_i, 1)}{\sigma^2(X_i, 0)} \geq \frac{1}{2}, \\ 
&e_W(X_i) = 1: \frac{2}{\sigma^2(X_i, 0)} - \frac{4}{\sigma^2(X_i, 1)} \leq 0 \iff \frac{\sigma^2(X_i, 1)}{\sigma^2(X_i, 0)} \leq 2.
\end{align*} 
Hence, an allowable interval for the variance ratio is $[1/2, 2]$. Moving from $\widetilde V_{\oracle}(e_Z)$ to $\widetilde{V}_{\plugin}(e_Z)$, the conclusions simply change to ones about $\hat{\sigma}^2(X_i, W_i)$, which completes the proof. 
\end{proof}
Note that the interval in Theorem \ref{thm:hetero_ratio} can be improved by instead checking the endpoints $e_Z(X_i) = 0$ and $e_Z(X_i) = 1$, which correspond to $e_W(X_i) = p_{\at}(X_i)$ and $e_W(X_i) = p_{\co}(X_i)$, respectively. These choices result in an interval for the variance ratio that is wider but that depends in a piecewise manner on the estimated compliance probabilities. Hence, we only state the simpler interval $[1/2, 2]$ here. 

In summary, if there is reason to believe that $\sigma^2(X, 1)$ and $\sigma^2(X, 0)$ differ for some values of $X$, our method can accommodate estimated variances whose ratio is not too extreme. Any deviations estimated to go beyond $[1/2, 2]$ can be regularized to lie in this interval, which may still be useful in the design stage. 
To suggest such a regularization, we first assume that $\min(\sigma^2(X,1),\sigma^2(X,0))>0$. We can then set the smaller variance to some value $\tau^2=\tau^2(X)$ and the larger one to $2\tau^2$. Because estimates are generally weighted in inverse proportion to variance, we can preserve the total importance $1/\sigma^2(X,0)+1/\sigma^2(X,1)$ 
of some value $X$ by solving
$$ \frac1{\tau^2}+\frac1{2\tau^2}=\frac1{\sigma^2(X,0)}+\frac1{\sigma^2(X,1)},$$
which yields
$$
\tau^2 = \frac{3/2}{1/\sigma^2(X,0)+1/\sigma^2(X,1)}.
$$

\section{Simulation details} \label{sec:expdetails}
All covariates were standardized to have mean zero and variance one at the outset to avoid a coefficient vector $\gamma$ with one or two substantially larger entries. In addition, because the risk score was so heavy-tailed, we opted to convert to one that is evenly spaced in $\{\frac{1}{n + 1},\dots, \frac{n}{n + 1}\}$ by taking the ordinal rank of a unit's risk score. We held out $20\%$ of the data in each simulation to be used in the pilot study. 

Given data $X$ and nudge propensity $e_Z(X)$, our data-generating procedure is as follows: 
\begin{align*} 
Z \st X &\sim e_Z(X), \\ 
\mathcal{C} \st X &\sim \text{Cat}(\{p_{\nt}(X), p_{\co}(X), p_{\at}(X)\}), \\
W \st \mathcal{C}, Z &= Z \indic\{\mathcal{C} = \co\} + \indic\{\mathcal{C} = \at\}, \\ 
Y \st X, W, C &\sim f(X) + WX\gamma + 10 \times \indic\{\mathcal{C} = \at\} - 10 \times \indic\{\mathcal{C} = \nt\} + N(0, 20).
\end{align*} 
Here, $\mathcal{C}$ is a unit's compliance status (either complier, always-taker, or never-taker). In addition, $f(X)$ and $\gamma$ are fit from a held-out subset of the data as described in Section \ref{sec:triage}. $f(X)$ is a random forest model, while $\gamma \in \bbR^7$ has the following values: 
\begin{align*} 
\text{(Intercept) } \gamma_0 &= -6.35 \\ 
\text{(Hours in ED) } \gamma_1 &= -32.31 \\ 
\text{(Weight in pounds) } \gamma_2 &= -10.65 \\ 
\text{(Age) } \gamma_3 &= 11.19 \\ 
\text{(Systolic blood pressure) } \gamma_4 &= -2.59 \\ 
\text{(Pulse) } \gamma_5 &= 62.03 \\ 
\text{(Risk score) } \gamma_6 &= -3.40 
\end{align*}
With this $\gamma$, the true final LATE is about $\tau_{\late} = -7.00$. The error term added to $Y$ depends on the unknown compliance status and is designed such that always-takers receive a boost in their length of stay and never-takers receive a decrease. Letting $r(X)$ be a patient's risk score, we chose the compliance probabilities as
\begin{align*} 
p_{\at}(X) &= 0.8 e^{-5r(X)} + 0.05 \\
p_{\at}(X) &= 0.8 e^{-5(1 - r(X))} + 0.05 \\
p_{\co}(X) &= 1 - p_{\at}(X) - p_{\nt}(X).
\end{align*}
These are visualized in Figure \ref{fig:comp-probs}. They incorporate an intuitively reasonable expectation that higher risk patients are much more likely to be always-takers and lower risk patients are much more likely to be never-takers. 

In addition, Figure \ref{fig:opt-objectives} compares the values of the objective function at each optimal constrained solution to provide a sense of how much each constraint set inflates the optimal (approximate) variance. The unconstrained (base) optimum gives an objective value of roughly $8.44 \times 10^{-4}$. For ease of interpretability, Figure \ref{fig:opt-objectives} plots the ratio of a given solution's objective value to this quantity. The monotonicity constraint only results in about a $6\%$ inflation of the objective, whereas using all three constraints with $\rho = 0.95$ results in about an $87\%$ inflation. 

\begin{figure}[t!] 
\centering
\includegraphics[width=12cm]{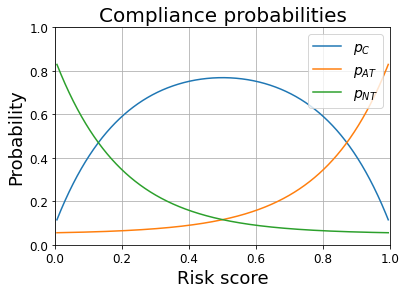}
\caption{Compliance probabilities as a function of transformed risk score. In this simulation, patients with very low risk are more likely to be never-takers, those with very high risk are more likely to be always-takers, and those in the middle are more likely to be compliers.}
\label{fig:comp-probs}
\end{figure}

\begin{figure}[t!] 
\centering
\includegraphics[width=13cm]{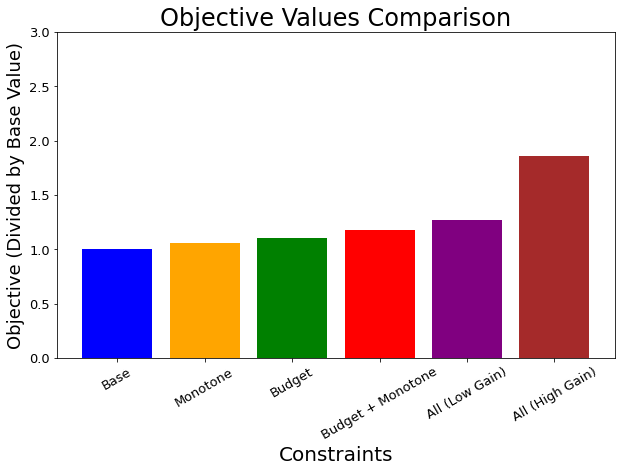}
\caption{Value of the objective function under each set of constraints, normalized by the value of the objective function at the optimal unconstrained solution.}
\label{fig:opt-objectives}
\end{figure}

\end{document}